\documentclass{article}

\usepackage[english]{babel}

\usepackage[letterpaper,top=2cm,bottom=2cm,left=3cm,right=3cm,marginparwidth=1.75cm]{geometry}

\usepackage{amsfonts}
\usepackage{amsmath}
\usepackage{graphicx}
\usepackage[colorlinks=true, allcolors=blue]{hyperref}
\usepackage[english]{babel}
\usepackage{amsthm}
\usepackage[linesnumbered,ruled,lined,noend]{algorithm2e}
\SetKw{KwBy}{by}  

\newtheorem{definition}{Definition}
\newtheorem{theorem}{Theorem}
\newtheorem{lemma}{Lemma}
\newtheorem{observation}{Observation}
\newtheorem{remark}{Remark}

\title{Shapley Meets Tutte}
\author{Martin Loebl\\
Agate Centre\\ Charles University Prague\\
loebl@kam.mff.cuni.cz}
\begin{document}
\maketitle

\begin{abstract}
We initiate the study of cooperative games where there exist {\em groups of pre-aligned agents}. 
 For example, in a road network, the agents are cross-roads and pre-aligned groups are road-segments. For a set of datasets, the agents are attributes and each database which connects two (or more) attributes is pre-aligned.
In this paper, each pre-aligned group has size two.

 Our goal is to find a way to determine the contribution of each individual local connection (pre-aligned couple) to the connectivity of the network, having an adversarial attack in mind or planning a defense of
the network, or wanting to split the profits of the network between various owners of the individual local connections of the network. We model these contributions by Shapley values of the connectivity augmented 'local values' which are characteristic functions determined, e.g., by failure probabilities of local connections.

We link the concepts of potential and Shapley values of these connectivity augmented local values to the world of chromatic and Tutte polynomials and the partition function of the Potts model from statistical physics.
\end{abstract}

\newpage
\section{Introduction}
\label{sec.intro}
Consider a connection network, e.g., a road network, a water pipeline network, an electric network, a network of optical cables, or a network of bus routes or railroads. Assume parts of the network are owned by different companies. Clearly, it is advantageous for customers to use the connection network regardless of which part is owned by which company. Can this be achieved, i.e., is there a fair way of sharing profits?
Similarly, consider a set of data sets that connect information. Assume that several companies own parts of the set. Is there a fair way to share profits from customers who use subsets of the set of databases to connect information that they are interested in? 
One can imagine a special case of this situation where databases are individual people and each person speaks two (or more) languages. 
Similarly, is there a way to determine the values of individual local connections in a network {\em for the network's connectivity}, having an adversarial attack in mind or planning a defense of the network? What if, in addition, the local connections may fail, with some probability.

All activities considered live in the {\em couple graph} $G=(V,E)$, where $V$ denotes the set of its vertices and $E$ denotes its set of edges. This graph models the connection network; the pre-aligned couples, i.e., individual connections (pipes, road segments, databases) are modeled by the edges of this graph, while the vertices model the items that are connected, e.g., crossroads or attributes or languages. 

The situation (business) we want to model and study is connectivity. We want to study how each edge, i.e., pre-aligned couple contributes to connectivity of the couple graph.

Cooperative Game Theory analyzes how groups of players (coalitions) can  fairly divide the
joint payouts. Hence, our goal is to model the connectivity of the couple graph as a cooperative game in which the players are the edges of the couple graph.

A cooperative game is usually determined by a characteristic function that gives value to each coalition. For such cooperative games, the most influential solution concept for modeling individual contributions of players to the total value of the game is their {\em Shapley value}. Let $\mathcal Q^+$ denote the set of non-negative rational numbers.

\medskip\noindent

{\em Our objective is to model the connectivity of the couple graph $G$ by constructing a value giving characteristic function
$w(G): 2^E\rightarrow \mathcal Q^+$ and studying the Shapley values of the edges.} 

\medskip\noindent

In situations we want to study, the value of a coalition has two components: it consists of the {\em local value} and the {\em connectivity value}. 

The local value of each subset of edges is given by its characteristic function  $w: 2^E\rightarrow \mathcal Q^+$ such that $w(\emptyset)= 0$ and is usually well known. We concentrate on local characteristic functions where for each $\emptyset\neq A\subset E$, $w(A)= \prod_{e\in A}w(e)$; for example, if $w(e)$ is the probability that edge $e$ does not fail, then $w(A)$ is equal to the {\em reliability} of coalition (set of edges) $A$. 

It is also natural to consider situations where an alignment in a given couple is {\em negative}. This can be included in our model considering the {\em local values} given by characteristic functions $w: 2^E\rightarrow \mathcal Q^+$ such that $w(A)= \prod_{e\in A}w(e)$ and letting $w(e)$ very small for negatively aligned couples.

Aside of the local value, each subset of edges (coalition) has yet another value, important for the operation of the connection network, which represents the contribution of the subset to {\em connecting the network}. 
This {\em connectivity value} of $A\subset E$ depends on the set of connectivity components of the graph $(V,A)$, and on the size of $A$. 

Hence, the construction of the characteristic function $w(G)$ introduced in this paper involves two types of value of coalitions (subsets of edges). 

\subsection{Acknowledgement} This project receives funding from the Horizon EU Framework Programme under Grant Agreement No. 101183743 (project AGATE). I would like to thank Martin Černý, Anna de Mier, Andrew Goodall and Ole Jann for helpful discussions.

\section{Basic Setting and Main Contribution}  
\label{sub.mmain}
We initiate the study of situations where there exist {\em groups of pre-aligned agents}. Given a partition of agents, a pre-aligned group contributes to its {\em frustration} if it is split among different classes of the partition.
In this paper, each pre-aligned group has size two, the general setting being a work in progress. For example, in a road network, the agents are cross-roads and pre-aligned groups are road-segments. For a set of datasets, the agents are attributes and each database which connects two (or more) attributes is pre-aligned.

 The structure of pre-aligned couples is determined by the graph $G= (V,E)$, where $V$ is the set of vertex-agents, and the set of pre-aligned couples forms the set $E$ of edges. We call this graph {\em the couple graph} of the game.
We first specify the class of cooperative games with pre-aligned couples we are interested in.

\begin{definition}
    \label{def.eda}
Let $G= (V,E)$ be a graph. We let 
$\mathcal E= \{w: 2^E\rightarrow \mathcal Q^+ \text{ such that } w(\emptyset) = 0\}$
 be the class  of cooperative games in the characteristic function form, where the edges of $G$ are the players.
\end{definition}

 \subsection{Our Contribution}
 \label{sub.g}
\begin{itemize}
    \item 
    Our main goal is to augment 'local' values of subsets of edges (pre-aligned couples), by including their 'connectivity' values, i.e., to favor subsets with smaller number of connectivity components and subsets of bigger size. For the resulting games, we study the Shapley values of edges. 
    \item
    We link the potential and Shapley values of these connectivity augmented local values to the world of chromatic and Tutte polynomials and the partition function of the Potts model from statistical physics.
\end{itemize}

\subsection{Main results} 
\label{sub.mres}

\begin{itemize}
    \item
    In the next section, we show how to modify, in a consistent way, the local values of subsets of edge-agents in order to stress the importance of connectivity. The resulting cooperative games are called {\em connectivity augmented games}.
    \item 
In Section \ref{s.pot} we study the potential of connectivity augmented games. We present our first main result in Theorem \ref{thm.chrom} which describes the potential of each connectivity augmented game using the chromatic polynomial of contractions of the couple graph.   
    \item 
In Section \ref{s.tutte}, we consider a class of 
{\em couple games} which are connectivity augmented games with local value function $w:E\rightarrow \mathcal Q^+$, $w(\emptyset)= 0$ such that for each $\emptyset\neq A\subset E$,  $w(A)= \prod_{e\in A}w(e)$.
Such characteristic functions model, for example, for each subset of edges $A$ the reliability of $A$, i.e., probability that $A$ does not fail.

 Our main results are: 
 
 (1) Theorem \ref{thm.bad} stating that the potential of each couple game is equivalent to the difference of the multivariate bad coloring polynomial (bad coloring polynomial is equivalent to the Tutte polynomial) and the chromatic polynomial of the couple graph.
 
 (2) Theorem \ref{thm.dcc} and Theorem \ref{thm.dcc1} stating that the Shapley value of each edge = pre-aligned couple $a$ can also be expressed by the chromatic polynomial and the
 multivariable bad coloring polynomial of the couple graph with $a$ contracted. 
 \item
 In Section \ref{s.qchrom}, we present an extension of the basic setting where we exchange the chromatic polynomial with the q-chromatic function.
    \item 
    This work thus connects basic notions of economics, mathematics and statistical physics: Shapley value, chromatic polynomial, Tutte polynomial, and the partition function of the Potts model.
\end{itemize}

\section{Connectivity augmented games}
\label{s.connectivity}
Let $G= (V,E)$ be a graph. 
Assume that we want to modify the {\em values of the subsets of edges} in the games of $\mathcal E$,  to favor the subsets with fewer connectivity components and of bigger size. Clearly, we want to do the modification in a consistent way: in which sense?

\subsection{Space of cooperative games}
\label{sub.c}

Each cooperative game of $\mathcal E$, determined by its characteristic function $w:2^E\rightarrow \mathcal Q$, $w(\emptyset)= 0$,
can be viewed as a rational vector indexed by subsets of $E$ and is clearly a linear combination of the standard basis of $(\mathcal Q)^{2^E}$, where the coefficient of a subset $A\subset E$ is $w(A)$. However, for cooperative games, it is customary to use a basis other than the standard one.

It is well known (see, e.g., \cite{S}) that each cooperative game of $\mathcal E$, determined by its characteristic function $w:2^N\rightarrow \mathcal Q^+$, can be written as a linear combination of {\em basic games}. 

\begin{definition}
    \label{def.basic}
The basic game $d(R), \emptyset\neq R\subset E$ is defined as follows: if $R\subset S\subset E$ then $d(R)(S)= 1$, otherwise $d(R)(S)= 0$. 
\end{definition}

For notational reasons, we let $d(\emptyset)(S)= 0$ for each $S\subset E$. The explicit form of the coefficients is written in the next lemma (see \cite{S}).

\begin{lemma}
    \label{l.coef}
Each cooperative game of $\mathcal E$ with characteristic function $w$ can be expressed as
$$
w=\sum_{\emptyset\neq R\subset E}c_R(w) d(R)
$$ 
where $c_R(w)= \sum_{T\subset R} (-1)^{|R|-|T|} w(T)$. 
\end{lemma}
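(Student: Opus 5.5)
The plan is to verify the identity pointwise. Both sides are functions $2^E\rightarrow\mathcal Q$, so it suffices to show that for every $S\subset E$
$$
w(S)=\sum_{\emptyset\neq R\subset E}c_R(w)\, d(R)(S).
$$
By Definition \ref{def.basic}, $d(R)(S)=1$ exactly when $\emptyset\neq R\subset S$, and it vanishes otherwise. The right-hand side therefore equals $\sum_{\emptyset\neq R\subset S}c_R(w)$. Moreover, the formula for $c_R(w)$ at $R=\emptyset$ gives $c_\emptyset(w)=w(\emptyset)=0$, so I may add the empty set to the sum for free. The claim then reduces to
$$
\sum_{R\subset S}\ \sum_{T\subset R}(-1)^{|R|-|T|}w(T)=w(S).
$$

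Next I would swap the order of summation and group the terms by $T$:
$$
\sum_{T\subset S}w(T)\sum_{T\subset R\subset S}(-1)^{|R|-|T|}.
$$
Write $R=T\cup U$ with $U\subset S\setminus T$. The inner sum then becomes $\sum_{U\subset S\setminus T}(-1)^{|U|}=(1-1)^{|S\setminus T|}$ by the binomial theorem. This equals $1$ if $T=S$ and $0$ otherwise, so the whole expression collapses to $w(S)$, which is what we want.

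This is just Möbius inversion on the Boolean lattice, so no step is really an obstacle. The only point needing care is the bookkeeping around the empty set. The sum in the statement excludes $R=\emptyset$, and the paper sets $d(\emptyset)\equiv 0$. The hypothesis $w(\emptyset)=0$ from Definition \ref{def.eda} is exactly what makes $c_\emptyset(w)=0$, which lets us extend the sum to all $R\subset S$ harmlessly. It also covers the case $S=\emptyset$, where both sides are $0$.

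If one also wants uniqueness of the coefficients, I would give a short triangularity argument. The $2^{|E|}-1$ games $d(R)$ with $R\neq\emptyset$ are linearly independent: evaluate a vanishing combination on an inclusion-minimal $R$ carrying a nonzero coefficient, and only that term survives. So these games form a basis of $\mathcal E$, and the coefficients $c_R(w)$ are forced.
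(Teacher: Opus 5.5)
Your proof is correct and follows the same route as the paper, which cites Shapley's work for this lemma and, in Remark~\ref{r.syn}, derives $c_R(w)$ by inclusion--exclusion (M\"obius inversion on the Boolean lattice), with $w(\emptyset)=0$ handling the empty set just as you observe. One cosmetic point: $\mathcal E$ consists of $\mathcal Q^+$-valued games, so it is a cone rather than a vector space, and your uniqueness argument really shows that the $d(R)$, $R\neq\emptyset$, form a basis of the rational games vanishing at $\emptyset$; that is all the lemma needs.
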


\begin{remark}
\label{r.syn}
Lemma \ref{l.coef} is often interpreted as describing cooperative games using {\em synergies}. Synergy is defined as a function 
$w': 2^E\rightarrow \mathcal Q$ that satisfies, for each $R\subset E$, that
$w(R)=\sum _{T\subset R}w'(T)$.
Using the principle of inclusion and exclusion, we get 
$w'(R)=\sum _{T\subset R}(-1)^{|R|-|T|}w(T)= c_R(w)$.
\end{remark}

\begin{definition}
    \label{def.cons}
    We say that a bijection $f: \mathcal E\rightarrow \mathcal E$ is {\em consistent}, if (1) for each $\emptyset\neq R\subset E$ there is a rational number $n_R$ such that $f(d(R))= n_Rd(R)$, and (2) for each cooperative game of $\mathcal E$ with characteristic function $w$, $f(w)= \sum_{\emptyset\neq R\subset E}c_R(w) n_Rd(R)$, where the coefficients $c_R(w)$ are introduced in Lemma \ref{l.coef}. 
\end{definition}

 \subsection{Augmenting preference of connectivity}
 \label{sub.pref}
 
 The construction by which we accomplish the desired consistent value augmentation uses the setting described above. We proceed as follows.

\begin{itemize}
\item
It is quite clear how to do the connectivity augmentation of values for {\em basic games} of Definition \ref{def.basic}: We linearly increase the value of a subset of edges (coalition) with the number of its edges and decrease the value exponentially with the number of its connectivity components. These {\em basic connectivity augmented games} are explicitly defined in Definition \ref{def.graph-gamebasis}.

\item 
In order to do connectivity augmentation consistently for all games of $\mathcal E$, we use Lemma \ref{l.coef}.

The change of each basic game into the {\em basic connectivity augmented game} in Definition \ref{def.graph-gamebasis}  
modifies each cooperative game $w$ of $\mathcal E$, using the linear combination with the same coefficients $(c_R(w))_{\emptyset\neq R\subset E}$, of {\em basic connectivity augmented games instead of the basic games}. 

We do not know these modified games explicitly in the standard basis, but we know them explicitly in the basis formed by the basic connectivity augmented games.
\end{itemize}

\begin{definition}
    \label{def.comp}
    For $A\subset E$ let $c(A)$ denote the number of connectivity components of $(V,A)$. We say that $A\subset E$ is {\em flat} if for each $e\in E\setminus A$, $c(A)> c(A\cup\{e\})$. 
    \end{definition}

\begin{definition}[Basic connectivity augmented games]    \label{def.graph-gamebasis}
    Let $G=(V,E)$ be a graph, and $x$ be a number. For $\emptyset\neq R\subset E$, let $b(R,G,x)$ be the characteristic function (of a cooperative game) defined as follows:
if $R\subset S\subset E$ then $b(R,G,x)(S)= |R|x^{c(R)}$, otherwise $b(R,G,x)(S)= 0$.
We call $b(R,G,x)$ a {\em basic connectivity augmented game}.
\end{definition}
    
\begin{definition}[Connectivity augmented games]
    \label{def.conaugame}
 Let $G=(V,E)$ be a graph. Let $w:2^E\rightarrow \mathcal Q$ be a cooperative game of $\mathcal E$. We denote by $w(G)$ the cooperative game $w(G)=\sum_{\emptyset\neq R\subset E}c_R(w) b(R,G,x)$. We call $w(G)$ {connectivity augmented $w$}, or a {\em connectivity augmented game}.
\end{definition}

\subsection{Back to the applications}
\label{sub.interpret}
As written in the Introduction, this paper aims to suggest a fair way to split the profits generated by a given connection network among the owners of parts of the network. We first model the network as a game of $\mathcal E$ whose local characteristic function, usually denoted by $w$, is explicitly known, e.g., for each $\emptyset\neq A\subset E$, we have $w(A)= \prod_{e\in A} w(e)$. 

Then, in this section (Section \ref{s.connectivity}), the local characteristic function $w$ is changed by connectivity augmentation. This modification models the {\em connectivity value} of a subset of pre-aligned couples, anticipated in the Introduction. 

The characteristic function of the resulting couple game  can be calculated in principle using Definition \ref{def.conaugame}, but the calculation is not clearly efficient. We neither know how to efficiently find the {\em total value} of a couple game.

This total value is clearly needed to determine the profit that we want to distribute. Why bother trying to find a way to distribute something that we cannot calculate?

Here is why: the {\em total value} of the resulting couple game models the total profit. We do not know how to calculate the total value, but in each practical case, we know the total profit! 

If we explicitly determine how to distribute the (implicit) total value, we can use the resulting distribution as {\em individual demands} and split the actual total profit using, e.g., contested garment rule (see \cite{AM}).

In the rest of the paper, we show that the potential and the Shapley values of a couple game, which distribute the implicit total value, correspond to the evaluations of the Potts partition function (multivariable Tutte polynomial) of contractions of the couple graph of the game.

\section{Shapley values and potential of a connectivity augmented game}
\label{s.pot}
In this section, we present a new formula for Shapley values in connectivity augmented games.

\subsection{Shapley values in a cooperative game}
\label{sub.shap}
The Shapley value, introduced in 1953 by Lloyd Shapley (see \cite{S}), is a solution concept to fairly distribute the total gains or costs among a group of players who have collaborated. It is used extensively in many areas of game theory, economics, and, e.g., machine learning.

It can be defined in many equivalent ways. We start with a definition below since it is useful for our purposes. Without loss of generality, we assume that the set of agents is the edge-set of a graph.

\begin{definition}[Shapley]
 \label{def.shapleygraphic}   
Let $G=(V,E)$ be a graph. The Shapley value of the agent $e\in E$ in the cooperative game $w$ of Lemma \ref{l.coef}, denoted by $\phi_e(w)$, is the unique linear map satisfying, for each $\emptyset\neq R\subset E$:
$\phi_e(d(R))= 1/|R|$ if $e\in R$ and $\phi_e(d(R))= 0$ if $e\notin R$.
Hence,
$$
\phi_e(w)= \sum_{e\in R\subset E}1/|R|c_R(w).
$$
\end{definition}

The Shapley value is the only solution that satisfies four fundamental properties: efficiency, symmetry, additivity, and the null player property, which are widely accepted as defining a fair distribution.
\begin{itemize}
    \item Efficiency: $\sum_{e\in E}\phi_e(w)= w(E)$,
    \item Symmetry: if two agents marginally contribute  equally to each coalition, then their Shapley values are equal,
    \item Additivity: it is linear: this property has been used in Definition \ref{def.shapleygraphic},
    \item Null player: Shapley value of an agent with all marginal contributions zero is equal to zero.
\end{itemize}

Another widely used definition is that the Shapley value of $e$ is the average, over all linear orders $L$ of $E$, of the marginal contribution of $e$ to the set of all predecessors of $e$ in $L$.

\subsection{Shapley values in connectivity augmented games}
\label{sub,sha}

In connectivity augmented games, the Shapley values are calculated by the next lemma.

\begin{lemma}
 \label{l.shapleygraphic}   
Let $G=(V,E)$ be a graph. Let $w:2^E\rightarrow \mathcal Q^+$ be a cooperative game of $\mathcal E$. The Shapley value of the edge $e\in E$ in the connectivity augmented game $w(G)$, denoted by $\phi_e(w(G))$, is the unique linear map that satisfies (1) for each $\emptyset\neq R\subset E$:
$\phi_e(b(R,G,x))= x^{c(R)}$ if $e\in R$ and $\phi_e(b(R,G,x))= 0$ if $e\notin R$.
Hence (2) for each $e\in E$,
$$
\phi_e(w(G))= \sum_{e\in R\subset E}x^{c(R)}c_R(w).
$$
\end{lemma}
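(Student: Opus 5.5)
The proof rests on the observation that each basic connectivity augmented game is a scalar multiple of a basic game. Definition \ref{def.graph-gamebasis} gives $b(R,G,x)(S)=|R|x^{c(R)}$ if $R\subset S$ and $0$ otherwise. Comparing with Definition \ref{def.basic}, this means
$$b(R,G,x)=|R|\,x^{c(R)}\,d(R)$$
for every $\emptyset\neq R\subset E$. The constant $|R|x^{c(R)}$ depends only on $R$ and the graph, not on $S$, so this is an identity between vectors in $\mathcal Q^{2^E}$. (Here $x$ is treated as a fixed rational number, or formally, so that everything stays within linear combinations of basic games.)

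For part (1), we use that the Shapley value $\phi_e$ of Definition \ref{def.shapleygraphic} is linear. This gives $\phi_e(b(R,G,x))=|R|x^{c(R)}\phi_e(d(R))$. If $e\in R$, then $\phi_e(d(R))=1/|R|$, so $\phi_e(b(R,G,x))=x^{c(R)}$; if $e\notin R$, it is $0$. Uniqueness is a separate point. The games $d(R)$, $\emptyset\neq R\subset E$, form a basis of $\mathcal E$ (viewed as a $\mathcal Q$-space), as follows from Lemma \ref{l.coef} and a dimension count ($2^{|E|}-1$ coordinates, since $w(\emptyset)=0$). Whenever $|R|x^{c(R)}\neq 0$, i.e.\ for $x\neq 0$, the $b(R,G,x)$ are nonzero rescalings of these basis vectors and hence also form a basis. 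A linear map is determined by its values on a basis, so the prescribed values determine $\phi_e$ uniquely. I expect this to be the one step needing care: if $x=0$ some $b(R,G,x)$ vanish, and uniqueness must then be read as uniqueness on their span, or $x$ must be assumed nonzero or formal.

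For part (2), expand $w(G)=\sum_{\emptyset\neq R\subset E}c_R(w)\,b(R,G,x)$ using Definition \ref{def.conaugame}. By linearity,
$$\phi_e(w(G))=\sum_{\emptyset\neq R\subset E}c_R(w)\,\phi_e(b(R,G,x)).$$
Substituting part (1) kills every term with $e\notin R$ and leaves $\sum_{e\in R\subset E}x^{c(R)}c_R(w)$, which is the claimed formula. The whole argument is a direct rewriting, and the only potential subtlety is the nonvanishing of the scalars noted above.
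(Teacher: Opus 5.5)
Your proof is correct and matches the paper's argument: the identity $b(R,G,x)=|R|x^{c(R)}d(R)$ together with linearity and $\phi_e(d(R))=1/|R|$ gives part (1), and expanding $w(G)$ via Definition \ref{def.conaugame} gives part (2). Your remark on uniqueness goes beyond the paper and is a legitimate caveat: the $b(R,G,x)$ form a basis only when $x\neq 0$, and at $x=0$ they all vanish.
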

\begin{proof}
The first part follows from Definition \ref{def.graph-gamebasis} and from the linearity of the Shapley value, since $b(R,G,x)= |R|x^{c(R)}d(R)$, and $\phi_e(d(R))= 1/|R|$ by Definition \ref{def.shapleygraphic}.

The second part follows from the linearity of the Shapley value and Definition \ref{def.conaugame}.

\end{proof}

\subsection{Potential of a cooperative game}
\label{sub.pot}

The Potential of a cooperative game was introduced in \cite{HM}.
Without loss of generality, we again assume that the set of agents is the edge-set of a graph.

\begin{definition}[Hart, Mas-Colell]
    \label{def.pot}
Let $G=(V,E)$ be a graph. The {\em Potential} $P(w)$ of a cooperative game $w$ of $\mathcal E$ is defined by the requirement that 
the marginal contribution of agent $e$ to $P(w)$, i.e., $P(w)- P(w|({E\setminus \{e\}})$, is equal to the Shapley value $\phi_e(w)$.
\end{definition}

\begin{lemma}
 \label{l.potgraphic}  
 Let $G=(V,E)$ be a graph. Let $w:2^E\rightarrow \mathcal Q^+$ be a cooperative game of $\mathcal E$. The potential $P(w(G))$ of the connectivity augmented game $w(G)$ satisfies:
$$
P(w(G))= \sum_{\emptyset\neq R\subset E}x^{c(R)}c_R(w)=
$$
$$
\sum_{R\subset E}x^{c(R)}\sum_{T\subset R} (-1)^{|R|-|T|} w(T)=
$$
$$
\sum_{T\subset E}w(T) \sum_{T\subset R}(-1)^{|R|-|T|}x^{c(R)}.
$$
\end{lemma}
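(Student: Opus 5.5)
The approach is to use the Hart--Mas-Colell characterization: the potential is the unique function $P$ on games (defined on all restrictions) with $P$ of the empty game equal to $0$ and $P(v)-P(v|_{E\setminus\{e\}})=\phi_e(v)$ for every player $e$. Its standard closed form in the unanimity basis is $P(v)=\sum_{\emptyset\neq R}c_R(v)/|R|$. The candidate is therefore $Q(w(G)):=\sum_{\emptyset\neq R\subset E}x^{c(R)}c_R(w)$. We verify that $Q$ satisfies the defining requirement of Definition \ref{def.pot}; uniqueness then gives $P(w(G))=Q(w(G))$.

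First we expand $w(G)$ in the basis $d(R)$. Since $b(R,G,x)=|R|x^{c(R)}d(R)$, Definition \ref{def.conaugame} gives $c_R(w(G))=|R|x^{c(R)}c_R(w)$. Plugging this into $\sum_R c_R(w(G))/|R|$ yields $Q$, which gives the first equality.

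Directly, we restrict to $E\setminus\{e\}$. A coefficient $c_R$ with $R\subset E\setminus\{e\}$ depends only on values $w(T)$ with $T\subset R$, so it is unchanged by the restriction. The augmented game on the restricted player set uses the same $x^{c(R)}$, where $c(R)$ is computed in $(V,R)$ and the vertex set is unchanged. Hence $Q(w(G))-Q(w(G)|_{E\setminus\{e\}})=\sum_{e\in R}x^{c(R)}c_R(w)$, which is exactly $\phi_e(w(G))$ by Lemma \ref{l.shapleygraphic}. Together with $Q=0$ on the empty game, this establishes the first line.

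The main point of care is the interpretation of the restriction. We take $w(G)|_{E\setminus\{e\}}$ to mean the connectivity augmentation on the graph $(V,E\setminus\{e\})$, which agrees with the literal restriction of the set function because $b(R,G,x)(S)$ depends only on $R$ and $S$ whenever both avoid $e$.

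The remaining equalities are bookkeeping. We substitute $c_R(w)=\sum_{T\subset R}(-1)^{|R|-|T|}w(T)$ from Lemma \ref{l.coef}. The term $R=\emptyset$ may be added because $c_\emptyset(w)=w(\emptyset)=0$. Finally, we swap the order of summation over pairs $T\subset R\subset E$.
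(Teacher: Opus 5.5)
Your proof is correct and takes the same route as the paper. The paper likewise derives the first equality from the defining property of the potential (Definition \ref{def.pot}) together with Lemma \ref{l.shapleygraphic}, and the remaining two from Lemma \ref{l.coef}, $w(\emptyset)=0$, and a swap of summation order. You additionally spell out the details the paper leaves implicit: the unanimity coefficients $c_R(w(G))=|R|x^{c(R)}c_R(w)$, and the fact that restricting $w(G)$ to $E\setminus\{e\}$ agrees with augmenting on $(V,E\setminus\{e\})$.
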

\begin{proof}
The first equation follows from the definition of the Potential (Definition \ref{def.pot}) and Lemma \ref{l.shapleygraphic}.
The second equation follows from Lemma \ref{l.coef} and the assumption that $w(\emptyset)= 0$, and the last equation is clear.

\end{proof}

We conclude this section by a formula connecting the Potential of a connectivity augmented game on graph $G$ with the chromatic polynomial of contractions of $G$. 

\subsection{Chromatic polynomial and potential of connectivity augmenting game}
\label{sub.chrompot}

The {\em chromatic polynomial} of a graph $G$, denoted by $C(G,x)$, is a polynomial which encodes the number of distinct ways to properly color the vertices of G (where colorings are counted as distinct even if they differ only by a permutation of colors). A {\em proper coloring} of vertices assigns different colors to the two vertices of any edge.

The chromatic polynomial was introduced by G.D. Birkhoff in 1912 (see \cite{BL}), who proved the following theorem.

\begin{theorem}[Birkhoff]
    \label{thm.bwh}
 Let $G=(V,E)$ be a graph, $x$ be a natural number, and  $C(G,x)$ denote the number of proper colorings of $G$ by at most $x$ colors. Then
 $$
 C(G,x)= \sum_{A\subset E}(-1)^{|A|}x^{c(A)},
 $$
 where $c(A)$ denotes the number of connectivity components of graph $(V,A)$ (see Definition \ref{def.comp}).
\end{theorem}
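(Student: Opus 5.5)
We prove Birkhoff's formula by inclusion--exclusion over the edges that are \emph{violated} by a coloring. Fix a natural number $x$ and let $\Omega$ be the set of all maps $\kappa: V\rightarrow\{1,\dots,x\}$, so $|\Omega|= x^{|V|}$. For an edge $e=uv\in E$ let $B_e\subset\Omega$ be the set of colorings with $\kappa(u)=\kappa(v)$, i.e., those in which $e$ is monochromatic. A coloring is proper exactly when it lies in none of the sets $B_e$, so $C(G,x)= |\Omega\setminus\bigcup_{e\in E}B_e|$.

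The principle of inclusion and exclusion, used already in Remark \ref{r.syn}, gives
$$
C(G,x)= \sum_{A\subset E}(-1)^{|A|}\Bigl|\bigcap_{e\in A}B_e\Bigr|,
$$
with the empty intersection interpreted as $\Omega$. Everything therefore reduces to one counting claim: for each $A\subset E$,
$$
\Bigl|\bigcap_{e\in A}B_e\Bigr|= x^{c(A)}.
$$
A coloring lies in $\bigcap_{e\in A}B_e$ precisely when the two endpoints of every edge of $A$ receive the same color. Following paths, this is equivalent to $\kappa$ being constant on every connectivity component of $(V,A)$. Conversely, any assignment of one color to each component yields such a coloring. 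Hence these colorings are in bijection with maps from the set of components of $(V,A)$ to $\{1,\dots,x\}$, and there are $x^{c(A)}$ of them. Isolated vertices count as components, which is consistent with Definition \ref{def.comp}, and the case $A=\emptyset$ gives $x^{|V|}=|\Omega|$, matching the convention for the empty intersection.

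Substituting the counting claim into the inclusion--exclusion sum gives $C(G,x)= \sum_{A\subset E}(-1)^{|A|}x^{c(A)}$. The only point requiring care is the component-constancy step: every edge of $A$ forces equality of colors along it, and equality propagates transitively along paths. Loops and parallel edges cause no trouble. A loop $e$ satisfies $B_e=\Omega$, and the formula correctly gives $C(G,x)=0$, since adding a loop to $A$ leaves $c(A)$ unchanged and so the terms cancel in pairs. Finally, because both sides are polynomials in $x$ that agree on all natural numbers, the identity also defines $C(G,x)$ as a polynomial, which is how it is used later for arbitrary $x$.
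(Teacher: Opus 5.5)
Your inclusion--exclusion argument is correct. The paper gives no proof of its own for this statement (it is quoted from Birkhoff), but your computation is essentially the one the paper performs in proving Theorem~\ref{thm.forka}: there $\prod_{e=\{u,v\}\in E}(1+y_e\delta(s(u),s(v)))$ is expanded over subsets $A$ of monochromatic edges, and the colorings constant on the components of $(V,A)$ are counted as $q^{c(A)}$, which gives exactly Birkhoff's formula once every $y_e$ is set to $-1$. Your closing remark is phrased loosely, since the left-hand side is not a priori a polynomial and it is the identity itself that makes it agree with one on the natural numbers, but this does not affect the theorem as stated.
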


\begin{definition}[chromatic polynomial]
\label{def.ch}
Let $G=(V,E)$ be a graph and let $x$ be a variable. The chromatic polynomial of $G$ is 
 $$
 C(G,x)= \sum_{A\subset E}(-1)^{|A|}x^{c(A)}.
 $$
\end{definition}

Before connecting the Potential of a connectivity augmented game and the chromatic polynomial, we need one more definition, of the {\em contraction} of a subset of edges.

\begin{definition}
  \label{def.contr} 
  Let $G=(V,E)$ be a graph and let $A\subset E$. We denote by 
  $G.A$ the graph obtained from $G$ by {\em contraction} of $A$, i.e., $G.A= (W,F)$ where $W$ is the set of all connectivity components of graph $(V,A)$ and $F$ has exactly one edge $e'$ for each edge $e$ of $E\setminus A$. Such edge $e'$ connects vertices $u,v$ of $W$ representing subsets of vertices $S_u, S_v$ of $G$ such that in $G$, $e$ connects $S_u, S_v$. Hence, $G.A$ may have loops or multiple edges.
\end{definition}

\begin{theorem}
\label{thm.chrom}
Let $G=(V,E)$ be a graph and let $w$ be a cooperative game of $\mathcal E$. The Potential $P(w(G))$ of the connectivity augmented game $w(G)$ satisfies:

$$
P(w(G))= \sum_{R\subset E \text{ flat}} w(R)C(G.R,x),
$$
where $G.R$ is the graph obtained from $G$ by contraction of edges of $R$, and $C(H,x)$ denotes the chromatic polynomial of graph $H$.
\end{theorem}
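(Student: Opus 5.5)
The plan is to start from the last expression in Lemma \ref{l.potgraphic},
$$
P(w(G))=\sum_{T\subset E}w(T)\sum_{T\subset R\subset E}(-1)^{|R|-|T|}x^{c(R)},
$$
and show two things: the inner sum equals $C(G.T,x)$ for every $T$, and $C(G.T,x)=0$ whenever $T$ is not flat. The summand with $T=\emptyset$ contributes nothing, since $w(\emptyset)=0$. So it makes no difference whether $\emptyset$ is counted among the flat sets; it is flat exactly when $G$ has no loops.

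For the first step, I write each $R\supseteq T$ uniquely as $R=T\cup A$ with $A\subset E\setminus T$, so that $(-1)^{|R|-|T|}=(-1)^{|A|}$. By Definition \ref{def.contr}, the edge set of $G.T$ is in bijection with $E\setminus T$, and the vertices of $G.T$ are the components of $(V,T)$. The key observation is that the number of components of $(V,T\cup A)$ equals the number of components of the spanning subgraph of $G.T$ formed by the edges corresponding to $A$. To see this, map each vertex of $G$ to the component of $(V,T)$ containing it. Two vertices of $G$ are joined by a walk in $T\cup A$ if and only if their images are joined by a walk in $G.T$ using edges of $A$: the $T$-edges only move inside a contracted vertex, and each $A$-edge corresponds to an edge of $G.T$ between the appropriate classes. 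Hence $c(T\cup A)=c_{G.T}(A)$, and Definition \ref{def.ch} applied to $G.T$ gives
$$
\sum_{T\subset R\subset E}(-1)^{|R|-|T|}x^{c(R)}=\sum_{A\subset E(G.T)}(-1)^{|A|}x^{c_{G.T}(A)}=C(G.T,x).
$$

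For the second step, suppose $T$ is not flat. Then there is $e\in E\setminus T$ with $c(T\cup\{e\})=c(T)$, so both ends of $e$ lie in one component of $(V,T)$, and the edge $e'$ of $G.T$ is a loop. I would then use the sign-reversing involution $A\mapsto A\triangle\{e'\}$ on the subsets of $E(G.T)$. Adding or removing a loop does not change the number of components, but it flips the sign $(-1)^{|A|}$, so the terms cancel in pairs and $C(G.T,x)=0$. This agrees with the fact that a graph with a loop has no proper colouring. Conversely, if $T$ is flat then $G.T$ is loopless, and these terms simply remain in the sum. Combining the two steps gives the formula of Theorem \ref{thm.chrom}.

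I expect the main obstacle to be the careful justification of $c(T\cup A)=c_{G.T}(A)$, since loops and multiple edges of $G.T$ must be handled correctly. Once that identity is in place, the rest is bookkeeping plus the standard loop-cancellation argument.
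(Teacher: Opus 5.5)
Your proposal is correct and follows essentially the same route as the paper: rewrite the inner sum of Lemma \ref{l.potgraphic} over $Q\subset E\setminus T$, discard the non-flat $T$ because their inner sums vanish, and identify the remaining inner sums with $C(G.T,x)$ via $c(T\cup Q)=c_{G.T}(Q)$. The only difference is that you justify the two steps the paper merely asserts, namely the component-count identity and the vanishing for non-flat $T$ (via the loop $e'$ in $G.T$ and the sign-reversing involution), which makes your version more complete.
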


\begin{proof}
By Lemma \ref{l.potgraphic},
$$
P(w(G))= \sum_{T\subset E}w(T) \sum_{T\subset R}(-1)^{|R|-|T|}x^{c(R)}=
$$
   $$
\sum_{T\subset E}w(T) \sum_{Q\subset E\setminus T}(-1)^{|Q|}x^{c((V, Q\cup T)}.
$$ 
Now, we notice that, for $T\subset E$ not flat,
$$
 \sum_{Q\subset E\setminus T}(-1)^{|Q|}x^{c((V, Q\cup T)}= 0.
$$ 
Hence,
 $$
\sum_{T\subset E}w(T) \sum_{Q\subset E\setminus T}(-1)^{|Q|}x^{c((V, Q\cup T)}=
$$
 $$
\sum_{T\subset E \text{ flat}}w(T) \sum_{Q\subset E\setminus T}(-1)^{|Q|}x^{c((V, Q\cup T)}=
$$
 $$
\sum_{T\subset E \text{ flat}}w(T) \sum_{Q\subset E\setminus T}(-1)^{|Q|}x^{c(V(G.T),Q)}=
$$ 
$$
\sum_{T\subset E \text{ flat}}w(T)C(G.T,x).
$$ 
\end{proof}

We end this section by defining a polynomial equivalent  to the chromatic polynomial (see \cite{B}).

\begin{definition}[characteristic polynomial]
\label{def.chp}
Let $G=(V,E)$ be a graph and let $x$ be a variable. For $A\subset E$ let $r(A)= |V|-c(A)$. The characteristic polynomial of $G$ is 
 $$
 p(G,x)= \sum_{A\subset E}(-1)^{|A|}x^{r(E)- r(A)}.
 $$
\end{definition}
 The following observation follows directly from the definition.

\begin{observation}
    \label{o.pchcol}
    $x^{c(E)}p(G,x)= C(G,x)$
\end{observation}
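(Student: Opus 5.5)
Since $r(A)=|V|-c(A)$ for every $A\subset E$, I will prove the identity by substituting this directly into Definition \ref{def.chp} and comparing term by term with Definition \ref{def.ch}.

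First, for each $A\subset E$ I compute the exponent in $p(G,x)$:
$$
r(E)-r(A) = (|V|-c(E)) - (|V|-c(A)) = c(A)-c(E).
$$
The $|V|$ terms cancel. This gives
$$
p(G,x)=\sum_{A\subset E}(-1)^{|A|}x^{c(A)-c(E)} .
$$

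Next, I multiply by $x^{c(E)}$. The factor does not depend on $A$, so it moves inside the sum, and each summand becomes $(-1)^{|A|}x^{c(A)}$. Hence
$$
x^{c(E)}p(G,x)=\sum_{A\subset E}(-1)^{|A|}x^{c(A)},
$$
which is $C(G,x)$ by Definition \ref{def.ch}.

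The only point that needs checking is that $x^{c(A)-c(E)}$ is a genuine polynomial in $x$, so that $p(G,x)$ is a polynomial rather than a Laurent expression. This holds because adding edges can only merge components, so $A\subset E$ implies $c(A)\ge c(E)$ and every exponent is non-negative. Beyond this, the argument is a direct rewriting of the definitions.
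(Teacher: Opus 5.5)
Your proof is correct and matches the paper's approach: the paper simply states that the observation follows directly from Definitions \ref{def.chp} and \ref{def.ch}, which is exactly the substitution $r(E)-r(A)=c(A)-c(E)$ you carry out. Your extra check that $c(A)\ge c(E)$ (so $p(G,x)$ is a genuine polynomial) is a sound remark but not needed for the identity itself.
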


\section{Shapley values of couple game and Tutte polynomial}
\label{s.tutte}

In this section, we study the main object of this paper, the {\em couple game} (see Definition \ref{def.coup}). We introduce the Tutte polynomial, a seminal notion of discrete mathematics, and its multivariable version which originates from the partition function of the Potts model. Finally, we prove our main results.

\begin{definition}
    \label{def.coup}
 The connectivity augmented game $w(G)$ is called {\em couple game} if the local value function $w:E\rightarrow \mathcal Q^+$ satisfies for each $\emptyset\neq A\subset E$, $w(A)= \prod_{e\in A}w(e)$.
\end{definition}

\subsection{Couple games and Reliability}
\label{sub.rel}
There are several natural couple games; in particular, if $w(e)$ is interpreted as the probability that the edge $e$ does not fail, then $w(A)= \prod_{e\in A}w(e)$ is equal to the {\em reliability} of the coalition (set of edges) $\emptyset\neq A$. 
The Shapley value $\phi_e(w(G))$ of such a couple game is interpreted as the value of edge $e$ for the connectivity of a network with given probabilities of local failures.

\subsection{Tutte polynomial}
\label{sub.chrom}
We introduce a seminal concept of discrete mathematics, the Tutte polynomial, defined by W.T. Tutte in 1947 (see \cite{T}) building upon the rank generating function of H. Whitney (see \cite{W}).

\begin{definition}[Tutte polynomial]
    \label{def.tutte}
    Let $G= (V,E)$ be a graph. The {\em Tutte polynomial} of $G$, denoted by $T(G,x,y)$, is defined by the following formula:
    $$
    T(G,x,y)= \sum_{A\subset E}(x-1)^{z(A)}(y-1)^{n(A)},
    $$
    where $z(A)= r(E)- r(A)$ and $n(A)= |A|- r(A)$; for $A\subset E$, $r(A)= |V|- c(A)$. 
\end{definition}

There are many different equivalent definitions of the Tutte polynomial. Here we mention only three, of which the first two are well known and the third one is less known, nevertheless we need it. 
The first one states a straightforward observation that the Tutte polynomial can be expressed as a 2-variable extension of the chromatic polynomial.

\begin{theorem}
 \label{thm.tch}
 The Tutte polynomial is equivalent to 
 $$
 W(G,x,y)= \sum_{A\subset E}y^{|A|}x^{c(A)}.
 $$
\end{theorem}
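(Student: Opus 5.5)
The plan is to make ``equivalent'' precise: each of $T(G,x,y)$ and $W(G,x,y)$ is obtained from the other by a change of variables, multiplied by a monomial prefactor that depends only on the fixed quantities $|V|$ and $c(E)$. I will do this by direct algebraic manipulation of the two subset expansions. Both are sums over the same index set $A\subset E$. The only work is to rewrite the exponents $z(A)$ and $n(A)$ in terms of $|A|$ and $c(A)$, and then match the summands term by term.

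Step one is to expand the exponents using $r(A)=|V|-c(A)$ from Definition~\ref{def.tutte}. This gives
$$z(A)=r(E)-r(A)=c(A)-c(E), \qquad n(A)=|A|-|V|+c(A).$$
Hence the summand of $T(G,x,y)$ becomes
$$(x-1)^{-c(E)}(y-1)^{-|V|}\,(y-1)^{|A|}\bigl((x-1)(y-1)\bigr)^{c(A)}.$$
The prefactor does not depend on $A$, so it comes out of the sum:
$$(x-1)^{c(E)}(y-1)^{|V|}\,T(G,x,y)=W\bigl(G,(x-1)(y-1),\,y-1\bigr).$$
This is valid as an identity of rational functions, and equivalently of polynomials after clearing denominators.

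Step two is the inverse substitution. Put $u=(x-1)(y-1)$ and $v=y-1$, so that $y=v+1$ and $x=1+u/v$. This yields
$$W(G,u,v)=u^{c(E)}v^{|V|-c(E)}\,T\bigl(G,1+u/v,\,1+v\bigr),$$
where the power of $v$ is $|V|-c(E)=r(E)$. So each polynomial determines the other, given $|V|$ and $c(E)$. Both of these can themselves be read off from $W$: $c(E)$ is the lowest power of $x$ appearing, and $|V|$ is the power of $x$ in the $A=\emptyset$ term.

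The main obstacle is not computational but a matter of bookkeeping. The inverse direction involves $1/v$, so I have to check that the result is genuinely a polynomial identity. I will do this by clearing denominators. Each summand of $T(G,1+u/v,1+v)$ contributes $(u/v)^{c(A)-c(E)}v^{n(A)}$. Multiplying by $u^{c(E)}v^{r(E)}$ gives $u^{c(A)}v^{\,r(E)-c(A)+c(E)+n(A)}$. The exponent of $v$ simplifies to $|V|-c(A)+n(A)=|A|$, since $r(E)+c(E)=|V|$ and $n(A)=|A|-|V|+c(A)$. This is exactly the summand $v^{|A|}u^{c(A)}$ of $W$, which confirms the identity directly.

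As a consistency check, setting $y=-1$ in $W$ recovers Definition~\ref{def.ch}, and Observation~\ref{o.pchcol} corresponds to the specialization $y=0$ of $T$, up to sign.
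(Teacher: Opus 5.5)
Your proposal is correct. The paper gives no proof of Theorem~\ref{thm.tch}; it calls it a ``straightforward observation,'' and your rewriting $z(A)=c(A)-c(E)$, $n(A)=|A|-|V|+c(A)$, which gives $(x-1)^{c(E)}(y-1)^{|V|}T(G,x,y)=W(G,(x-1)(y-1),y-1)$ together with its inverse, is exactly the direct verification that observation rests on. One small imprecision, only in your closing side remark: relating $T$ at $y=0$ to $p$ or $C$ also needs the change of argument $x\mapsto 1-x$, not just a sign, since $p(G,x)=(-1)^{r(E)}T(G,1-x,0)$.
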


The second is called the {\em contraction deletion formula}.

\begin{theorem}[deletion contraction formula]
\label{thm.delcontr}
Let $G=(V,E)$ be a graph and $e\in E$.
\begin{itemize}
    \item 
    If $e$ is a loop then $T(G,x,y)= yT(G-e,x,y)$,
    \item 
    If $e$ is a bridge then $T(G,x,y)= xT(G.e,x,y)$,
    \item 
    If $e$ is neither loop nor bridge than
    $$
    T(G,x,y)= T(G-e,x,y)+ T(G.e,x,y).
    $$
    
\end{itemize}
\end{theorem}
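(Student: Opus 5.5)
We prove the three cases of Theorem \ref{thm.delcontr} directly from Definition \ref{def.tutte}. Split the sum over $A\subset E$ into the subsets with $e\notin A$ and those with $e\in A$. The first family consists exactly of the edge sets of $G-e$. For the second family, write $A=B\cup\{e\}$ with $B\subset E\setminus\{e\}$. When $e$ is not a loop, the sets $B$ are naturally the edge sets of $G.e$, because contracting $e$ merges its two endpoints and leaves every other edge in place (Definition \ref{def.contr} with $A=\{e\}$). In each case the task is to compare $z(\cdot)$ and $n(\cdot)$ in $G$ with the corresponding quantities in $G-e$ and $G.e$. All of this reduces to bookkeeping with $r(A)=|V|-c(A)$.

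\textbf{The rank facts.} We would record them first.
\begin{itemize}
\item For $B\subset E\setminus\{e\}$, the number of components of $(V,B)$ is the same in $G$ and in $G-e$, so $r_{G-e}(B)=r_G(B)$.
\item If $e$ is not a bridge, then $r(E\setminus\{e\})=r(E)$. If $e$ is a bridge, then $r(E\setminus\{e\})=r(E)-1$.
\item If $e$ is not a loop, then $G.e$ has $|V|-1$ vertices and $c_{G.e}(B)=c_G(B\cup\{e\})$, so $r_{G.e}(B)=r_G(B\cup\{e\})-1$ and $r(G.e)=r(E)-1$.
\item Whether $e\in A$ changes $r$ by $1$ exactly when $e$ joins two different components of $(V,A\setminus\{e\})$. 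This never happens for a loop, and it always happens for a bridge.
\end{itemize}

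\textbf{Generic case.} Suppose $e$ is neither a loop nor a bridge. For $e\notin A$, we have $z_G(A)=z_{G-e}(A)$ because the full rank is unchanged, and $n_G(A)=n_{G-e}(A)$. For $A=B\cup\{e\}$, we get $z_G(A)=r(E)-r_G(B\cup e)=(r(E)-1)-(r_G(B\cup e)-1)=z_{G.e}(B)$. We also get $n_G(A)=|B|+1-r_G(B\cup e)=|B|-r_{G.e}(B)=n_{G.e}(B)$. The two halves of the sum are therefore $T(G-e)$ and $T(G.e)$.

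\textbf{Loop case.} For $A=B\cup\{e\}$, the rank is $r(A)=r(B)$, so $z$ is unchanged and $n$ increases by one. Hence the $e\in A$ half equals $(y-1)T(G-e)$. The total is $(1+(y-1))T(G-e)=yT(G-e)$.

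\textbf{Bridge case.} Here it is convenient to use $G.e$ for both halves. For $B\subset E\setminus\{e\}$, adding $e$ raises the rank, so $r_G(B\cup e)=r_G(B)+1$. The $e\in A$ term therefore has $z_{G.e}(B)$ and $n_{G.e}(B)$ exactly as in the generic computation. For the $e\notin A$ term, $z_G(B)=r(E)-r_G(B)=z_{G.e}(B)+1$ and $n_G(B)=|B|-r_G(B)=n_{G.e}(B)$. Thus the $e\notin A$ half equals $(x-1)T(G.e)$, and the total is $xT(G.e)$.

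\textbf{Main obstacle.} The only delicate point is justifying $c_{G.e}(B)=c_G(B\cup\{e\})$ when $e$ is not a loop, and handling the bridge case, where $G-e$ is not used at all. Once these identities are in place, everything else is routine substitution.
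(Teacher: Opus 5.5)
Your proof is correct. Splitting the sum in Definition~\ref{def.tutte} according to whether $e\in A$, and tracking $r$, $z$, $n$ under deletion and contraction, gives all three cases. The bookkeeping checks out. For a bridge you get $z_G(B)=z_{G.e}(B)+1$ and $n_G(B)=n_{G.e}(B)$. For a loop you get $n_G(B\cup\{e\})=n_{G-e}(B)+1$ with $z$ unchanged, which uses $r(E\setminus\{e\})=r(E)$ because a loop is not a bridge.

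The paper gives no proof of this theorem; it quotes it as one of the well-known characterizations of the Tutte polynomial. Your argument is therefore a self-contained proof rather than a variant of one in the text.

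The two facts you call delicate each take one line:
\begin{itemize}
\item For a non-loop $e$, Definition~\ref{def.contr} makes the vertices of $G.e$ the merged pair of endpoints together with singletons. A set $B$ connects a family of these classes in $G.e$ exactly when $B\cup\{e\}$ connects their union in $G$, so $c_{G.e}(B)=c_G(B\cup\{e\})$.
\item For a bridge, the endpoints of $e$ lie in different components of $(V,E\setminus\{e\})$. Hence they lie in different components of every $(V,B)$ with $B\subset E\setminus\{e\}$, since those components refine the former. So adding $e$ always raises the rank.
\end{itemize}

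The paper's own machinery offers a different route. Substituting $r(A)=|V|-c(A)$ into Definition~\ref{def.tutte} gives
$$
T(G,x,y)=(x-1)^{-c(E)}(y-1)^{-|V|}\,\mathcal{T}(G,(x-1)(y-1),(y-1)_{e\in E}),
$$
and Theorem~\ref{thm.delcontrmult} is a recurrence with no case distinction. The three cases then come only from the normalizing prefactor:
\begin{itemize}
\item Deleting a bridge raises $c(E)$ by one.
\item Contracting a non-loop lowers $|V|$ by one, which absorbs the factor $y_e=y-1$.
\item For a loop, $G.e$ coincides with $G-e$.
\end{itemize}
The bridge case also needs $T(G-e)=T(G.e)$, which holds because both graphs have the same rank function on $E\setminus\{e\}$. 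That route explains where the casework comes from. Yours is more elementary and never leaves Definition~\ref{def.tutte}.
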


Next, we introduce an equivalent expression of the Tutte polynomial as the {\em coboundary polynomial}, and the {\em bad colouring polynomial}, see \cite{B}.

\begin{definition}[coboundary polynomial]
    \label{def.cobad}
Let $G$ be a graph. The coboundary polynomial is defined by
$$
{\bar B}(G,x,y)= \sum_{A\subset E \text{flat}}p(G.A,x)y^{|A|}.
$$
\end{definition}

\begin{definition}[bad coloring polynomial]
    \label{def.bad}
    Let $G= (V,E)$ be a graph and let $x,y$ be variables.
    The {\em bad coloring polynomial} of $G$, denoted by $B(G,x,y$, is defined by the following formula:
    $$
    B(G,x,y)= \sum_{A\subset E \text{ flat}}C(G.A,x)y^{|A|}.
    $$
\end{definition}

The name is justified by the following straightforward observation.

\begin{observation}
    \label{o.bad}
Let $G$ be a graph and let $x$ be a positive integer. The bad coloring polynomial satisfies
$$
B(G,x,y)= \sum_jb_j(G,x)y^j,
$$
where $b_j(G,x)$ denotes the number of x-colourings of $G$ with exactly $j$ monochromatic edges.
\end{observation}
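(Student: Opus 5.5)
The plan is to classify each $x$-colouring of $G$ by its set of monochromatic edges. Fix a positive integer $x$ and a colouring $\kappa: V\rightarrow \{1,\dots,x\}$. Let $M(\kappa)\subset E$ be the set of edges of $G$ whose two endpoints receive the same colour. The first step is to show that $M(\kappa)$ is flat in the sense of Definition \ref{def.comp}: every component of $(V,M(\kappa))$ is monochromatic. So if $e\notin M(\kappa)$ joined two vertices of the same component, its endpoints would share a colour and $e$ would lie in $M(\kappa)$, a contradiction. Hence every $e\in E\setminus M(\kappa)$ joins two different components, and $c(M(\kappa)\cup\{e\})<c(M(\kappa))$.

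The second step is a bijection: for a fixed flat $A\subset E$, the colourings $\kappa$ with $M(\kappa)=A$ correspond exactly to proper colourings of the contraction $G.A$ (Definition \ref{def.contr}).
\begin{itemize}
\item Given such a $\kappa$, each component of $(V,A)$ is monochromatic, since it is connected by monochromatic edges. So $\kappa$ induces a colouring of the vertices of $G.A$.
\item Every edge of $G.A$ comes from some $e\in E\setminus A$, which is not monochromatic under $\kappa$. Hence the induced colouring is proper; in particular it has no monochromatic loops.
\item Conversely, take a proper colouring of $G.A$ and lift it to $V$ by colouring each vertex with the colour of its component. The edges of $A$ become monochromatic, and the edges outside $A$ do not, because the colouring of $G.A$ is proper.
\end{itemize}
Flatness matters here only to make the correspondence clean. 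If $A$ were not flat, some edge $e\notin A$ would join two vertices of the same component of $(V,A)$ and so become a loop in $G.A$. Then $C(G.A,x)=0$, which is consistent with there being no colourings with $M(\kappa)=A$. This is why it is harmless that Definition \ref{def.bad} sums only over flat sets.

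The final step counts colourings. By Theorem \ref{thm.bwh} applied to $G.A$, the number of colourings with $M(\kappa)=A$ equals $C(G.A,x)$. Grouping all $x^{|V|}$ colourings by $|M(\kappa)|=j$ gives
$$b_j(G,x)=\sum_{A \text{ flat},\, |A|=j} C(G.A,x).$$
Multiplying by $y^j$ and summing over $j$ gives exactly the sum in Definition \ref{def.bad}.

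The only delicate point is the bijection in the second step. There one must check carefully that loops and multiple edges of $G.A$ are handled correctly: parallel edges impose the same constraint, and loops arise only from non-flat $A$. I would verify this first, directly from Definition \ref{def.contr}; the rest is bookkeeping.
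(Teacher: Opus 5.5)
Your proof is correct: the monochromatic edge set of every colouring is a flat, and the colourings with $M(\kappa)=A$ correspond bijectively to proper colourings of $G.A$. Theorem~\ref{thm.bwh}, whose inclusion--exclusion proof applies verbatim to multigraphs such as $G.A$, then counts them by $C(G.A,x)$.

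The paper gives no proof and only calls the statement straightforward. Your classification of colourings by their monochromatic edge set is precisely the direct argument it leaves implicit. An alternative is to set $y_e=y-1$ in Theorem~\ref{thm.mtuba} and use the first step of the proof of Theorem~\ref{thm.forka}, which gives $B(G,x,y)=\sum_{s}y^{|m(s)|}$. Your bijective argument does not depend on those later results, so it is more self-contained.
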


The following observation follows directly from the definition.

\begin{observation}
    \label{o.chcol}
    $B(G,x,y)= x^{c(E)}{\bar B}(G,x,y)$.
\end{observation}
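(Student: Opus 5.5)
The plan is to compare the two defining sums of Definitions \ref{def.bad} and \ref{def.cobad} term by term. Both range over the same index set, the flat subsets $A\subset E$, and both carry the same factor $y^{|A|}$. It therefore suffices to show that for every flat $A$ (in fact for every $A\subset E$)
$$
C(G.A,x)= x^{c(E)}\,p(G.A,x),
$$
where $c(E)$ is the number of components of the original graph $(V,E)$. Summing this identity against $y^{|A|}$ over all flat $A$ then gives $B(G,x,y)=x^{c(E)}{\bar B}(G,x,y)$ directly.

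The first step is to apply Observation \ref{o.pchcol} to the graph $H=G.A=(W,F)$. This gives $C(H,x)=x^{c_H(F)}p(H,x)$, where $c_H(F)$ is the number of components of $H$ with all its edges. If one wants to avoid relying on the observation, it can be checked directly from Definition \ref{def.chp}. For $B\subset F$ we have $r_H(F)-r_H(B)=(|W|-c_H(F))-(|W|-c_H(B))=c_H(B)-c_H(F)$. Hence $x^{c_H(F)}x^{r_H(F)-r_H(B)}=x^{c_H(B)}$, and multiplying Definition \ref{def.chp} through by $x^{c_H(F)}$ yields exactly Definition \ref{def.ch} for $H$.

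The second step, which is the only point needing care, is to show that $c_H(F)=c(E)$: contraction preserves the number of connected components. By Definition \ref{def.contr}, the vertices of $G.A$ are the components $S_u$ of $(V,A)$, and each edge of $E\setminus A$ joins the classes containing its endpoints. I would exhibit the correspondence explicitly. Each component $K$ of $(V,E)$ is a union of components of $(V,A)$, since $A\subset E$. The classes contained in $K$ are connected to one another in $G.A$, because any path in $K$ projects to a walk in $G.A$: edges of $A$ collapse to a single vertex, and edges of $E\setminus A$ become edges of $F$. Conversely, no edge of $F$ joins classes lying in different components of $(V,E)$, because every edge of $F$ comes from an edge of $E$. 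Thus $K\mapsto\{S_u\subset K\}$ is a bijection between the components of $(V,E)$ and those of $H$.

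Combining the two steps gives $C(G.A,x)=x^{c(E)}p(G.A,x)$ for each flat $A$, which completes the argument. I expect no real obstacle beyond stating the component correspondence cleanly. One should also be careful that in $p(G.A,x)$ the rank function is computed in $G.A$ (on the vertex set $W$), not in $G$.
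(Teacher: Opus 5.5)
Your proof is correct and takes the same approach as the paper, which only says the identity follows directly from the definitions: compare the two sums term by term and apply Observation \ref{o.pchcol} to each contraction $G.A$. You also make explicit the detail the paper leaves implicit, namely that $G.A$ has exactly $c(E)$ components, so the same factor $x^{c(E)}$ comes out of every term.
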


\begin{theorem}[\cite{B}]
     \label{thm.tuba}
    The Tutte polynomial and the coboundary polynomial are related as follows:
    $$
     T(G,x,y)=(y-1)^{-r(E)}{\bar B}(G,(x-1)(y-1),y)), 
    $$
    $$
    \bar B(G,x,y)=(y-1)^{r(E)}T(G,\frac{x+y-1}{y-1},y).
    $$
\end{theorem}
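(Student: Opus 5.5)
We prove the second identity first and then obtain the first one by a change of variables. Throughout, $c(A)$ is the number of components of $(V,A)$ and $r(A)=|V|-c(A)$.

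\textbf{Step 1: expand $\bar B$ as a subset sum.} By Definition \ref{def.cobad} and Definition \ref{def.chp}, applied to the contraction $G.A$ (with $A$ flat),
$$
\bar B(G,x,y)=\sum_{A\text{ flat}} y^{|A|}\sum_{Q\subset E\setminus A}(-1)^{|Q|}x^{r_{G.A}(E\setminus A)-r_{G.A}(Q)}.
$$
Contraction preserves the component count: the components of $(V(G.A),Q)$ correspond to those of $(V,A\cup Q)$. Since $A$ is flat, its components are exactly the vertex sets of the components of some induced subgraph-union. Hence $r_{G.A}(Q)=r(A\cup Q)-r(A)$, and the exponent equals $r(E)-r(A\cup Q)$.

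\textbf{Step 2: reindex by $S=A\cup Q$.} For a fixed $S\subset E$, the flat sets $A\subset S$ paired with $Q=S\setminus A$ are exactly the flat sets contained in $S$. Therefore
$$
\bar B(G,x,y)=\sum_{S\subset E}x^{r(E)-r(S)}\sum_{A\subset S,\,A\text{ flat}}y^{|A|}(-1)^{|S|-|A|}.
$$

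\textbf{Step 3: remove the flatness restriction.} This is the key step, done in the same spirit as the vanishing argument in the proof of Theorem \ref{thm.chrom}. For any $A$, sum over all $Q\subset E\setminus A$ with sign $(-1)^{|Q|}$ and weight $x^{c(A\cup Q)}$. If some $e\notin A$ joins two vertices of one component of $A$, then adding or removing $e$ leaves $c$ unchanged. The resulting sign-reversing involution cancels every term, so the sum vanishes.

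The same involution gives the identity we need. Summing instead over all subsets $A$ with weight $(y-1)^{|A|}$ corresponds, after binomial expansion $y^{|A|}=\sum_{B\subset A}(y-1)^{|B|}$, to assigning each edge inside the components either ``counted with $y-1$'' or ``not''. This yields
$$
\bar B(G,x,y)=\sum_{B\subset E}(y-1)^{|B|}x^{r(E)-r(B)}\sum_{Q\subset E\setminus B}(-1)^{|Q|}x^{r(B)-r(B\cup Q)}\cdot\bigl(\text{correction}\bigr).
$$
The cleaner route is a direct two-step computation. Write $x^{c(S)}\,y^{\#\text{mono}}$ via Observation \ref{o.bad}: $B(G,x,y)$ counts $x$-colourings weighted by $y$ raised to the number of monochromatic edges. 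Expanding $y^{m}=\prod_{\text{mono }e}(1+(y-1))$ gives
$$
B(G,x,y)=\sum_{A\subset E}(y-1)^{|A|}x^{c(A)},
$$
since the colourings making every edge of $A$ monochromatic number $x^{c(A)}$. This is the Fortuin--Kasteleyn step.

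\textbf{Step 4: rewrite in Tutte form.} Use Observation \ref{o.chcol} to divide by $x^{c(E)}$. Then write $x^{c(A)-c(E)}=x^{r(E)-r(A)}$, and factor
$$
(y-1)^{|A|}=(y-1)^{r(E)}(y-1)^{n(A)}(y-1)^{-(r(E)-r(A))}.
$$
This gives
$$
\bar B(G,x,y)=(y-1)^{r(E)}\sum_{A}\Bigl(\frac{x}{y-1}\Bigr)^{z(A)}(y-1)^{n(A)}.
$$
Comparing with Definition \ref{def.tutte} requires $X-1=x/(y-1)$, i.e. $X=(x+y-1)/(y-1)$, which is the second identity. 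Substituting $x=(X-1)(y-1)$ and inverting gives the first identity.

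\textbf{Main obstacle.} The hardest step is the counting claim $B=\sum_A(y-1)^{|A|}x^{c(A)}$. Observation \ref{o.bad} yields it for positive integer $x$, and it then extends to all $x$ because both sides are polynomials that agree at infinitely many points. Proving Observation \ref{o.bad} itself uses the standard fact that $C(G.A,x)$ counts colourings whose monochromatic edge set is exactly the flat set $A$.
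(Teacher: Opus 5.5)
Your proof is correct. The paper gives no proof of this statement (it is quoted from \cite{B}), so the relevant comparison is with the paper's proof of the multivariate analogue, Theorem~\ref{thm.mtuba}.

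\textbf{Your argument.} The part that carries the proof is the second half of Step~3 together with Step~4.
Observation~\ref{o.bad} and the expansion $y^{m}=\prod_{e}(1+(y-1))$ over monochromatic edges give $B(G,x,y)=\sum_{A\subset E}(y-1)^{|A|}x^{c(A)}$. This is Theorem~\ref{thm.forka} with $q=x$ and $y_e=y-1$, read through Observation~\ref{o.bad}.
Observation~\ref{o.chcol} and $c(A)-c(E)=z(A)$ then turn this into $\bar B(G,x,y)=\sum_{A}(y-1)^{|A|}x^{z(A)}$.
Finally, the bookkeeping $|A|=r(E)+n(A)-z(A)$ gives both identities.

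\textbf{The paper's route.} The paper's own tools reach the same key identity by specializing Theorem~\ref{thm.mtuba} to $y_e=y-1$. That proof is purely algebraic. It drops flatness using the vanishing of $\sum_{Q\subset E\setminus A}(-1)^{|Q|}x^{c(A\cup Q)}$ for non-flat $A$. It then expands $y^{|A|}=\sum_{C\subset A}(y-1)^{|C|}$ and uses that $\sum_{C\subset A\subset S}(-1)^{|S|-|A|}$ vanishes unless $S=C$. This is exactly where your Steps~1--3 were heading before you switched routes.

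\textbf{Trade-offs.} The algebraic route works for indeterminate $x$ directly. Your colouring route is more transparent but needs integer $x$ plus polynomial interpolation. A self-contained proof of Observation~\ref{o.bad} also needs, besides the fact you cite, that the monochromatic edge set of any colouring is automatically flat. An edge outside it with both ends in one of its components would itself be monochromatic, so each colouring is counted exactly once.

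\textbf{Clean-up.} In a write-up, delete the unfinished display containing ``(correction)'', which is not a valid identity. Also delete the garbled flatness sentence in Step~1: the identity $r_{G.A}(Q)=r(A\cup Q)-r(A)$ holds for every $A$, flat or not.
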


\subsection{The partition function of the Potts model}
\label{sub.multit}
Next, we will introduce the {\em multivariable extension of the Tutte polynomial}, also called the partition function of the {\em Potts model} (see \cite{Sok}).
\begin{definition}
    \label{def.potts}
    Let $G= (V,E)$ be a graph and let $y= (y_e)_{e\in E}$ be the vector of variables associated with edges of $G$. Let $q$ be a positive integer. The q-state Potts model partition function for
the graph G is defined by
$$
Z(G,q,(y_e)_{e\in E})= \sum_{s:V\rightarrow [q]}\prod_{e=\{u,v\}\in E}(1+ y_e\delta(s(u),s(v)),
$$
where $\delta(a,b)= 1$ if $a=b$ and $\delta(a,b)= 0$ otherwise.
\end{definition}


Next theorem is called Fortuin-Kasteleyn representation of the Potts model (see \cite{Sok}). 

\begin{theorem}
    \label{thm.forka}
    Let $G= (V,E)$ be a graph and let $y= (y_e)_{e\in E}$ be the vector of variables associated with edges of $G$. Let $q$ be a positive integer. Then
    $$
    Z(G,q,(y_e)_{e\in E})= \sum_{A\subset E}q^{c(A)}[\prod_{e\in A}y_e].
    $$
\end{theorem}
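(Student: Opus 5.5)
The plan is to expand the product in Definition \ref{def.potts} edge by edge and then regroup the terms by the set of edges that contributed their $y_e$ factor. For each edge $e=\{u,v\}$ the factor $1+y_e\delta(s(u),s(v))$ is a sum of two terms. So for a fixed map $s:V\rightarrow[q]$ the product over $E$ expands as
$$
\prod_{e\in E}(1+y_e\delta(s(u),s(v)))=\sum_{A\subset E}\prod_{e=\{u,v\}\in A}y_e\,\delta(s(u),s(v)).
$$
Here $A$ records the edges whose second term was chosen. Summing over all $s$ and exchanging the two finite sums gives
$$
Z(G,q,(y_e)_{e\in E})=\sum_{A\subset E}\Big[\prod_{e\in A}y_e\Big]\cdot \#\{s:V\rightarrow[q] : s(u)=s(v)\ \text{for every } \{u,v\}\in A\}.
$$

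It then remains to count, for a fixed $A\subset E$, the maps $s$ that are constant on both endpoints of every edge of $A$. Such an $s$ must be constant along any walk in $(V,A)$, and hence constant on each connectivity component of $(V,A)$ (Definition \ref{def.comp}). Conversely, any choice of one colour per component defines a valid $s$. The admissible maps are therefore in bijection with $[q]^{\{\text{components}\}}$, so there are $q^{c(A)}$ of them. Substituting this count yields the claimed formula.

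There is no real obstacle; the only care needed is with loops and multiple edges. For a loop $e$ we have $\delta(s(u),s(u))=1$ always, so the loop imposes no constraint and does not change $c(A)$, which is consistent with the count. Parallel edges simply impose the same constraint twice. As a sanity check, setting all $y_e=-1$ recovers Theorem \ref{thm.bwh}: the left side then counts proper colourings and the right side becomes $C(G,q)$.
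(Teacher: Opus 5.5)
Your proof is correct and follows essentially the same route as the paper: expand the product, exchange the sum over colourings $s$ with the sum over edge subsets $A$, and count the $q^{c(A)}$ colourings in which every edge of $A$ is monochromatic. The only difference is cosmetic: the paper first rewrites the product as $\prod_{e\in m(s)}(1+y_e)$ over the set $m(s)$ of monochromatic edges before expanding, whereas you keep the $\delta$ factors inside the expansion; you also justify the count $q^{c(A)}$, which the paper leaves implicit.
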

\begin{proof}
 $$
Z(G,q,(y_e)_{e\in E})= \sum_{s:V\rightarrow [q]}\prod_{e=\{u,v\}\in E}(1+ y_e\delta(s(u),s(v))=
$$ 
$$
\sum_{s:V\rightarrow [q]}\prod_{e\in m(s)}(1+y_e),
    $$
    where $m(s)$ denotes the set of monochromatic edges of $s$. We further calculate:
    $$
  \sum_{s:V\rightarrow [q]}\prod_{e\in m(s)}(1+y_e)=
    \sum_{s:V\rightarrow [q]}\sum_{A\subset m(s)}\prod_{e\in A}y_e = \sum_{A\subset E}\sum_{s:V\rightarrow [q]; A\subset m(s)}\prod_{e\in A}y_e=
    $$
    $$
    \sum_{A\subset E}[\prod_{e\in A}y_e]q^{c(A)}.
    $$
  
  \end{proof}

  The previous theorem and Theorem \ref{thm.tch} motivate the next definition of {\em multivariate Tutte  polynomial}. We denote it by $\mathcal{T}$ to mark its difference from the Tutte polynomial (denoted by $T$).

  \begin{definition}
      \label{def.potpol}
Let $G= (V,E)$ be a graph and let $y= (y_e)_{e\in E}$ be the vector of variables associated with edges of $G$. Let $x$ be another variable. Then we define the {\em multivariate Tutte  polynomial} by
    $$
    \mathcal{T}(G,x,(y_e)_{e\in E})= \sum_{A\subset E}x^{c(A)}[\prod_{e\in A}y_e].
    $$
      
  \end{definition}

  Next, we also introduce the  multivariate bad coloring polynomial.

\begin{definition}[multivariate bad coloring polynomial]
    \label{def.bc}
    Let $G= (V,E)$ be a graph and let $y= (y_e)_{e\in E}$ be a vector of variables associated with edges of $G$. 
    The {\em multivariate bad coloring  polynomial} of $G$, denoted by $B(G,x,(y_e)_{e\in E})$, is defined by the following formula:
    $$
    B(G,x,(y_e)_{e\in E})= \sum_{A\subset E \text{ flat}}C(G.A,x)\prod_{e\in A}y_e.
    $$
\end{definition}

The multivariate Tutte polynomial and the multivariate bad coloring polynomial are related as follows:
\begin{theorem}
    \label{thm.mtuba}
    Let $G= (V,E)$ be a graph and let $y= (y_e)_{e\in E}$ be a vector of variables associated with edges of $G$. Then
    $$
B(G,x,(y_e+1)_{e\in E})= \mathcal{T}(G,x,(y_e)_{e\in E}).
    $$
\end{theorem}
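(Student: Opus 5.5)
Expand the left-hand side using Definition \ref{def.bc} and Definition \ref{def.ch}, then regroup the resulting double sum by the union of the two edge sets. Substituting $y_e+1$ for $y_e$ gives
$$
B(G,x,(y_e+1)_{e\in E})=\sum_{A\subset E \text{ flat}}C(G.A,x)\prod_{e\in A}(1+y_e).
$$
Two facts are needed. First, $C(G.A,x)=\sum_{Q\subset E\setminus A}(-1)^{|Q|}x^{c(A\cup Q)}$; this holds because the components of $(V(G.A),Q)$ correspond to the components of $(V,A\cup Q)$, and it is exactly the identification used in the proof of Theorem \ref{thm.chrom}. Second, $\prod_{e\in A}(1+y_e)=\sum_{S\subset A}\prod_{e\in S}y_e$. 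Substituting both, the left-hand side becomes
$$
\sum_{A \text{ flat}}\ \sum_{S\subset A}\ \sum_{Q\subset E\setminus A}(-1)^{|Q|}x^{c(A\cup Q)}\prod_{e\in S}y_e .
$$

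Next, use the vanishing statement already noted in the proof of Theorem \ref{thm.chrom}: for $A$ not flat, $\sum_{Q\subset E\setminus A}(-1)^{|Q|}x^{c(A\cup Q)}=0$. To be safe I would justify it directly. If $e\notin A$ does not decrease $c(A)$, then $e$ has both ends in one component of $(V,A)$, so $c(A\cup Q\cup\{e\})=c(A\cup Q)$ for every $Q$. Pairing $Q$ with $Q\triangle\{e\}$ therefore cancels the sum. Hence the restriction to flat $A$ can be dropped. Writing $R=A\cup Q$, the expression becomes $\sum_{R\subset E}x^{c(R)}\sum_{S\subset A\subset R}(-1)^{|R\setminus A|}\prod_{e\in S}y_e$, with $A$ ranging over all sets between $S$ and $R$.

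For fixed $S$ and $R$, the inner alternating sum $\sum_{S\subset A\subset R}(-1)^{|R\setminus A|}$ equals $1$ if $S=R$ and $0$ otherwise, by the binomial theorem. What survives is $\sum_{R\subset E}x^{c(R)}\prod_{e\in R}y_e=\mathcal T(G,x,(y_e)_{e\in E})$, as required.

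The main obstacle is the step that removes the flatness restriction. This requires care about the convention in Definition \ref{def.comp}. Flatness, as defined there, means every outside edge lowers the component count, whereas the usual notion in the literature is a closed set, where no outside edge lowers it. The cancellation argument above works for the sets that have an outside edge \emph{not} lowering $c$, so I will carry out the argument with the convention that makes the vanishing identity used in Theorem \ref{thm.chrom} correct. Alternatively, I will sidestep the issue by proving the identity for the sum over all $A$ and noting that the discarded terms vanish. Everything else is routine Möbius inversion on the Boolean lattice.
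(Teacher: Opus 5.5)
Your proof is correct and is essentially the paper's argument: expand $C(G.A,x)$ as $\sum_{Q\subset E\setminus A}(-1)^{|Q|}x^{c(A\cup Q)}$, expand $\prod_{e\in A}(1+y_e)$, drop the flatness restriction because the inner sum vanishes for non-flat $A$, and regroup by $R=A\cup Q$ so that the alternating sum over $S\subset A\subset R$ leaves only $S=R$. You also justify the two cancellations (the pairing $Q\leftrightarrow Q\triangle\{e\}$ and the binomial identity) that the paper merely asserts.

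Your worry about conventions is unfounded, however. Definition \ref{def.comp} is exactly the standard matroid flat: every outside edge raises the rank $|V|-c(\cdot)$, i.e.\ lowers $c$. The non-flat sets in that sense are precisely those with an outside edge not lowering $c$, which is the case your pairing argument treats, so no change of convention is needed.
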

\begin{proof}
$$
    B(G,x,(y_e+1)_{e\in E})= \sum_{A\subset E \text{ flat}}C(G.A,x)\prod_{e\in A}(y_e+1)=
    \sum_{A\subset E \text{flat}}[\prod_{e\in A}(y_e+1)] \sum_{B\subset E-A} (-1)^{|B|}x^{c(A\cup B)}=
    $$
    $$
    \sum_{A\subset E}[\prod_{e\in A}(y_e+1)] \sum_{B\subset E-A} (-1)^{|B|}x^{c(A\cup B)}=
     \sum_{C\subset A\subset E}[\prod_{e\in C}(y_e)] \sum_{B\subset E-A} (-1)^{|B|}x^{c(A\cup B)}=
    $$
    $$
    \sum_{C\subset E}[\prod_{e\in C}(y_e)] \sum_{C\subset A, B\subset E-A} (-1)^{|B|}x^{c(A\cup B)}=
    \sum_{A\subset E}x^{c(A)}[\prod_{e\in A}y_e].
    $$
     The third equation follows from an observation that for $A\subset E$ not flat, 
     $$
     \sum_{B\subset E-A} (-1)^{|B|}x^{c(A\cup B)}= 0.
     $$
     The  last equation follows from an observation that for each $C\subset E$,
     $$
     \sum_{C\subset A, B\subset E-A, C\neq A\cup B} (-1)^{|B|}x^{c(A\cup B)}=0.
     $$

\end{proof}

\subsection{Deletion contraction formulas}
\label{sub.delcont}
We state the deletion contraction formulas that we need to prove the main results.

\begin{theorem}[\cite{Sok}]
\label{thm.delcontrmult}
Let $G=(V,E)$ be a graph and $a\in E$.
 $$
    \mathcal{T}(G,x,(y_e:e\in E))= \mathcal{T}(G-a,x,(y_e:e\in E-a))+y_a \mathcal{T}(G.a,x,(y_e:e\in E-a)).
    $$
\end{theorem}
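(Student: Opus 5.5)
The plan is to prove the deletion--contraction identity of Theorem \ref{thm.delcontrmult} directly from Definition \ref{def.potpol}. I will split the sum over $A\subset E$ according to whether the distinguished edge $a$ lies in $A$, and identify each part with the corresponding term on the right-hand side.

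First, the subsets $A\subset E$ with $a\notin A$ are exactly the subsets of the edge set $E-a$ of $G-a$. Deleting an edge does not change the vertex set, so the number of connectivity components of $(V,A)$ is the same whether $A$ is viewed in $G$ or in $G-a$. Also, the weight $\prod_{e\in A}y_e$ involves only variables $y_e$ with $e\in E-a$. Hence this part of the sum is literally $\mathcal{T}(G-a,x,(y_e:e\in E-a))$.

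Second, the subsets with $a\in A$ are in bijection with subsets $A'=A\setminus\{a\}\subset E-a$, and the weight factors as $y_a\prod_{e\in A'}y_e$. The key claim is that $c_G(A'\cup\{a\})=c_{G.a}(A')$, where on the right $A'$ is read as the corresponding edges of $G.a$ in the sense of Definition \ref{def.contr}. To see this, recall that $G.a$ has as vertices the components of $(V,\{a\})$. If $a$ joins distinct vertices $u,v$, then $u$ and $v$ are merged into one vertex and all other vertices are singletons. If $a$ is a loop, the vertex set is unchanged and $G.a=G-a$.

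Merging $u$ and $v$ is exactly what adding $a$ to $A'$ does to the component structure. Formally, the map sending each vertex of $G$ to its class in $G.a$ induces a bijection between the components of $(V,A'\cup\{a\})$ and the components of $(V(G.a),A')$. Vertices in the same component of the former are linked by paths using edges of $A'$ and possibly $a$. These project to walks in $G.a$ using edges of $A'$, since $a$ becomes trivial after the merge. Conversely, walks in $G.a$ lift to paths in $G$ by inserting $a$ wherever the walk passes through the merged vertex. The loop case is immediate, because adding a loop never changes components. This gives the second term $y_a\mathcal{T}(G.a,x,(y_e:e\in E-a))$, and the identity follows by adding the two parts.

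The only step needing care is the component-count identity $c_G(A'\cup\{a\})=c_{G.a}(A')$, in particular keeping the conventions of Definition \ref{def.contr} (loops and multiple edges retained) consistent. Everything else is routine bookkeeping of the sum.
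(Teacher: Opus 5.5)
The paper gives no proof of this statement and only cites \cite{Sok}. Your argument is correct and is the standard proof of that identity: split the sum in Definition \ref{def.potpol} according to whether $a\in A$, and use $c_G(A'\cup\{a\})=c_{G.a}(A')$, including the loop case where $G.a=G-a$.
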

As an immediate consequence of Theorem \ref{thm.mtuba}, we get 
\begin{theorem}
\label{thm.delbad}
Let $G=(V,E)$ be a graph, and $a\in E$.
 $$
    B(G,x,(y_e:e\in E))= B(G-a,x,(y_e:e\in E-a))+(y_a-1) B(G.a,x,(y_e:e\in E-a)).
    $$
\end{theorem}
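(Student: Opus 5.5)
We derive the statement from Theorem \ref{thm.mtuba} together with the deletion contraction formula for the multivariate Tutte polynomial, Theorem \ref{thm.delcontrmult}. The idea is to move the deletion contraction identity from $\mathcal T$ to $B$ via the substitution $y_e\mapsto y_e-1$.

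First apply Theorem \ref{thm.mtuba} to $G$ with the variables $(y_e-1)_{e\in E}$. This gives
$$B(G,x,(y_e)_{e\in E})=\mathcal T(G,x,(y_e-1)_{e\in E}).$$
Theorem \ref{thm.mtuba} is a polynomial identity in independent variables, so this substitution is legitimate. Applying the same theorem to the graphs $G-a$ and $G.a$, each with edge set $E-a$, gives
$$B(G-a,x,(y_e)_{e\in E-a})=\mathcal T(G-a,x,(y_e-1)_{e\in E-a})$$
and
$$B(G.a,x,(y_e)_{e\in E-a})=\mathcal T(G.a,x,(y_e-1)_{e\in E-a}).$$
Here we must use that $G.a$ is again a graph, possibly with loops and multiple edges. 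The proof of Theorem \ref{thm.mtuba} relies only on the subset expansions and the flatness cancellation, so it is valid for multigraphs as well. Equally, flatness in $G.a$ is defined through the component counts $c$ of $G.a$.

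Next, apply Theorem \ref{thm.delcontrmult} with variables $(y_e-1)_{e\in E}$:
$$\mathcal T(G,x,(y_e-1)_{e\in E})=\mathcal T(G-a,x,(y_e-1)_{e\in E-a})+(y_a-1)\,\mathcal T(G.a,x,(y_e-1)_{e\in E-a}).$$
Substituting the three identities from the first step into this equation yields the claimed formula.

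The only point needing care is the case where $a$ is a loop. Then $G.a$ and $G-a$ coincide as graphs on the same vertex set, and Theorem \ref{thm.delcontrmult} still holds as stated, since it follows directly from splitting the Fortuin--Kasteleyn sum in Definition \ref{def.potpol} according to whether $a\in A$. If one prefers not to rely on \cite{Sok}, I would verify Theorem \ref{thm.delcontrmult} directly. For $a\notin A$ the term is a term of $\mathcal T(G-a)$. For $a\in A$ we have $c_G(A)=c_{G.a}(A-a)$, so the term equals $y_a$ times a term of $\mathcal T(G.a)$. This makes the whole argument self-contained.
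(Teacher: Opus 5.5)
Your proposal is correct and follows the paper's route. The paper presents this statement as an immediate consequence of Theorem~\ref{thm.mtuba} together with the deletion--contraction formula of Theorem~\ref{thm.delcontrmult}, which is exactly your substitution $y_e\mapsto y_e-1$ argument; your extra remarks (Theorem~\ref{thm.mtuba} holding for multigraphs such as $G.a$, the loop case, and the direct check of Theorem~\ref{thm.delcontrmult} via $c_G(A)=c_{G.a}(A-a)$ for $a\in A$) fill in details the paper leaves implicit.
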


We will also use 
\begin{theorem}[\cite{BL}]
\label{thm.delcontrchrom}
Let $G=(V,E)$ be a graph and $a\in E$.
 $$
    C(G,x,y)= C(G-a,x,y)- C(C.a,x,y).
    $$
\end{theorem}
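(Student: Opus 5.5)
The statement is the deletion--contraction recurrence $C(G,x)=C(G-a,x)-C(G.a,x)$ for the chromatic polynomial; the extra variable $y$ is vacuous and $C.a$ should read $G.a$. I would prove it directly from Definition \ref{def.ch}, i.e.\ from Birkhoff's expansion $C(G,x)=\sum_{A\subset E}(-1)^{|A|}x^{c(A)}$. The approach is to split the subsets $A\subset E$ according to whether $a\in A$.

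First, the subsets $A$ with $a\notin A$ are exactly the subsets of the edge set of $G-a$. Deleting $a$ does not change the vertex set $V$, so $c(A)$ computed in $G$ equals $c(A)$ computed in $G-a$. Hence this part of the sum is exactly $C(G-a,x)$.

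Second, the subsets with $a\in A$ can be written as $A=A'\cup\{a\}$ with $A'\subset E\setminus\{a\}$, and such a term carries the sign $(-1)^{|A'|+1}$. The key point is to identify $c(A'\cup\{a\})$ in $G$ with $c(A')$ in $G.a$. By Definition \ref{def.contr}, $G.a$ has as vertices the components of $(V,\{a\})$: the two endpoints of $a$ are merged into one vertex and every other vertex is a singleton. The edges of $G.a$ correspond bijectively to $E\setminus\{a\}$. Merging the endpoints of $a$ and then adding the edges of $A'$ produces the same partition into components as adding $A'\cup\{a\}$ to $V$, so the component counts agree. Therefore the second part equals $-\sum_{A'}(-1)^{|A'|}x^{c_{G.a}(A')}=-C(G.a,x)$, and adding the two parts gives the formula.

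The only step needing care is this component bijection, and in particular the case where $a$ is a loop. Then $G.a$ is $G$ with the loop removed, which coincides with $G-a$. The formula then gives $C(G,x)=0$, consistent with a graph having a loop admitting no proper colourings. The argument above handles this case uniformly, since for a loop $c(A'\cup\{a\})=c(A')$ and the two parts cancel.

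Alternatively, for integer $x$ one could give the coloring argument: the colourings of $G-a$ split into those proper on $a$, counted by $C(G,x)$, and those monochromatic on $a$, which correspond to proper colourings of $G.a$. Polynomial identity then extends the formula to variable $x$. I would present the subset-expansion proof, since it works directly for variable $x$ and matches the paper's style.
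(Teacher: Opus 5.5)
Your proof is correct, and you are right to read the statement as $C(G,x)=C(G-a,x)-C(G.a,x)$: the variable $y$ plays no role and $C.a$ is a typo for $G.a$. The paper gives no proof of its own here; it only cites Birkhoff--Lewis. The classical argument behind that citation is essentially the colouring count you mention as an alternative: for integer $x$, split the colourings of $G-a$ that are proper on $E\setminus\{a\}$ according to whether the ends of $a$ receive different colours or the same colour, then extend to a polynomial identity.

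Your main route instead works directly from the subset expansion of Definition \ref{def.ch} with $x$ a variable, so no interpolation step is needed. The only real content is the identification $c_G(A'\cup\{a\})=c_{G.a}(A')$, which you justify correctly, including the loop case.

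Your argument is exactly the one underlying Theorem \ref{thm.delcontrmult}, specialized to $y_e=-1$ for all $e$. Indeed, by Definitions \ref{def.potpol} and \ref{def.ch}, $\mathcal{T}(G,x,(-1)_{e\in E})=\sum_{A\subset E}(-1)^{|A|}x^{c(A)}=C(G,x)$, so the statement follows in one line from that theorem. This is arguably the most natural way to place it within the paper's framework.
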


\subsection{Main results}
\label{sub.main}
In this section, we prove our two main results. The following Theorem \ref{thm.bad} expresses the Potential of a couple game. 

\begin{theorem}
    \label{thm.bad}
Let $G=(V,E)$ be a graph. Let $w(G)$ be a couple game defined by $w:2^E\rightarrow \mathcal Q^+$, i.e., for each $\emptyset\neq A\subset E$, $w(A)= \prod_{e\in A}w(e)$ and $w(\emptyset)= 0$
(see Definition \ref{def.coup}). 
The potential $P(w(G))$ satisfies:
 $$
 P(w(G))= \sum_{T\subset E \text{ flat}}w(T)C(G.T,x)= B(G,x,(w(e))_{e\in E})- C(G,x).
$$  
\end{theorem}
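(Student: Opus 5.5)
The first equality is Theorem~\ref{thm.chrom} applied to the couple game, so only the second equality needs an argument. I will compare $\sum_{T\text{ flat}} w(T)C(G.T,x)$ with Definition~\ref{def.bc} evaluated at $y_e=w(e)$. Definition~\ref{def.bc} gives
$$
B(G,x,(w(e))_{e\in E})=\sum_{A\subset E\text{ flat}}C(G.A,x)\prod_{e\in A}w(e).
$$
For every flat $A\neq\emptyset$ the product $\prod_{e\in A}w(e)$ equals $w(A)$ by the definition of a couple game. The two sums therefore agree term by term, except possibly at $A=\emptyset$.

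The empty set is the only special term. On the couple-game side $w(\emptyset)=0$, so it contributes nothing. On the bad-coloring side the empty product is $1$, so it contributes $C(G.\emptyset,x)$, but only if $\emptyset$ is flat. Here $G.\emptyset=G$, since contracting no edges leaves the components of $(V,\emptyset)$, i.e.\ the vertices of $G$, with all edges kept. Hence $C(G.\emptyset,x)=C(G,x)$. Subtracting this term gives the claimed identity $P(w(G))=B(G,x,(w(e))_{e\in E})-C(G,x)$.

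The one point needing care is whether $\emptyset$ is flat, and it splits into two cases. If $G$ has no loops, adding any edge $e$ to $\emptyset$ merges two vertices, so $c(\{e\})<c(\emptyset)$ and $\emptyset$ is flat; the identity then holds exactly as stated. If $G$ has a loop, $\emptyset$ is not flat, but then $C(G,x)=0$: by Theorem~\ref{thm.bwh} no proper coloring exists, and algebraically the terms of Definition~\ref{def.ch} cancel in pairs by toggling the loop. Since the $\emptyset$ term is then absent from $B$ and $C(G,x)=0$, the identity holds in this case too. I will state this loop case explicitly, because it is the only place where the argument is not a literal term-by-term match.
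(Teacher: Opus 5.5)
Your proof is correct and follows the paper's argument: the first equality is Theorem~\ref{thm.chrom}, and the second comes from matching the nonempty flat terms with Definition~\ref{def.bc} at $y_e=w(e)$ and identifying the $\emptyset$ term of $B$ as $C(G.\emptyset,x)=C(G,x)$. Your separate check of graphs with loops, where $\emptyset$ is not flat but $C(G,x)=0$, is correct and covers a case the paper's one-line proof leaves implicit.
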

\begin{proof}
 It follows from Theorem \ref{thm.chrom}, Definition \ref{def.bc} that
 $$
 P(w(G))= \sum_{T\subset E \text{ flat}}w(T)C(G.T,x)= 
 $$
 $$
 \sum_{\emptyset\neq T \text{ flat}}[\prod_{e\in T}w(e)] C(G.T,x)= B(G,x,(w(e))_{e\in E})- C(G,x).
 $$
\end{proof}

Next, we turn our attention to the Shapley value. We recall (see Definition  \ref{def.pot}) that the potential $P(w)$ and the Shapley values $\phi_e(w)$ of a cooperative game $w$ satisfy $\phi_e(w)= P(w)- P(w|({E-e}))$. This suggests that a deletion-contraction formula can be used to determine the Shapley values of couple games using previous Theorem \ref{thm.bad}.

\begin{theorem}
    \label{thm.dcc}
   Let $G=(V,E)$ be a graph and let $w(G)$ be a couple game as in Theorem \ref{thm.bad}. Let $a\in E$ be an edge of $G$.
The Shapley value  $ \phi_a(w(G))$ of the couple game $w(G)$ satisfies: 
$$
\phi_a(w(G))= (w(a)-1)B(G.a,x,(w_e)_{e\in E-a})+ C(G.a,x).
$$
\end{theorem}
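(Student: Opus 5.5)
The plan is to use the defining property of the potential (Definition \ref{def.pot}): $\phi_a(w(G))= P(w(G))- P(w(G)|_{E-a})$. I would evaluate both potentials with Theorem \ref{thm.bad}, subtract, and simplify the difference with the two deletion--contraction formulas, Theorem \ref{thm.delbad} and Theorem \ref{thm.delcontrchrom}.

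The first and main step is to identify the restricted game $w(G)|_{E-a}$ as a couple game on the graph $G-a$. For $S\subset E-a$, Definition \ref{def.conaugame} and Definition \ref{def.graph-gamebasis} give
$$
w(G)(S)=\sum_{\emptyset\neq R\subset S}c_R(w)\,|R|\,x^{c(R)}.
$$
Two facts make this match the game on $G-a$:
\begin{itemize}
\item Each coefficient $c_R(w)$ (Lemma \ref{l.coef}) involves only the values $w(T)$ with $T\subset R\subset E-a$. So it coincides with $c_R(w')$, where $w'$ is the restriction of $w$ to $2^{E-a}$. This $w'$ is again multiplicative, with $w'(\emptyset)=0$.
\item The number of components $c(R)$ of $(V,R)$ is the same whether it is computed in $G$ or in $G-a$, because the vertex set is unchanged.
\end{itemize}
Hence $w(G)|_{E-a}=w'(G-a)$, which is a couple game on $G-a$. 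The point needing care is that Definition \ref{def.pot} determines $P$ uniquely once we normalise $P$ of the empty game to $0$. Lemma \ref{l.potgraphic} gives exactly this normalised potential, because its formula sums only over nonempty $R$. So the recursive definition applies consistently to the restricted game.

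Next, apply Theorem \ref{thm.bad} to both $G$ and $G-a$. This gives
$$
\phi_a(w(G))=\bigl[B(G,x,(w(e))_{e\in E})-B(G-a,x,(w(e))_{e\in E-a})\bigr]-\bigl[C(G,x)-C(G-a,x)\bigr].
$$
By Theorem \ref{thm.delbad}, with $y_e=w(e)$, the first bracket equals $(w(a)-1)B(G.a,x,(w(e))_{e\in E-a})$. By Theorem \ref{thm.delcontrchrom}, the second bracket equals $-C(G.a,x)$. Substituting both yields the claimed formula.

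Finally, I would check the degenerate case where $a$ is a loop, since there $G.a=G-a$. The deletion--contraction identities still hold, because they are derived from the subset expansions, which do not exclude loops. For example, $C(G,x)=0=C(G-a,x)-C(G.a,x)$ in that case. So no separate argument is needed. The only genuine obstacle is the restriction/normalisation step above; everything after it is direct substitution.
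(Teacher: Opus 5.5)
Your proposal is correct and follows essentially the paper's route: write $\phi_a(w(G))=P(w(G))-P(w(G-a))$, evaluate both potentials with Theorem~\ref{thm.bad}, and finish with the deletion--contraction formulas of Theorems~\ref{thm.delbad} and~\ref{thm.delcontrchrom}. The only difference is minor: the paper reads off $P(w(G))-\phi_a(w(G))=\sum_{\emptyset\neq R\subset E-a}x^{c(R)}c_R(w)=P(w(G-a))$ directly from the explicit formulas of Lemmas~\ref{l.shapleygraphic} and~\ref{l.potgraphic}, whereas you justify the same identity by showing that the restricted game is the couple game on $G-a$.
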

\begin{proof}
We have, by Lemma \ref{l.shapleygraphic} and Lemma \ref{l.potgraphic}, that 
$$
 P(w(G))- \phi_a(w(G))= \sum_{\emptyset\neq R\subset E-a}x^{c(R)}c_R(w)= P(w(G-a))=
 B(G-a,x,(w(e))_{e\in E-a})- C(G-a,x).
$$
The last equation is Theorem \ref{thm.bad}. Now, we use Theorem \ref{thm.mtuba} and the deletion contraction formulas of Theorem \ref{thm.delbad} and Theorem \ref{thm.delcontrchrom}.
    
    \end{proof}

In particular, 
\begin{theorem}
    \label{thm.dcc1}
   Let $G=(V,E)$ be a graph and let $w(G)$ be a couple game such that $w(A)=1$ for each $\emptyset\neq A\subset E$. Let $a\in E$ be an edge of $G$. Then
$$
\phi_a(w(G))= C(G.a,x).
$$
\end{theorem}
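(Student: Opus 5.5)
Theorem \ref{thm.dcc1} is the special case $w(e)=1$ for all $e\in E$ of Theorem \ref{thm.dcc}. Since $w(A)=\prod_{e\in A}w(e)$ for nonempty $A$ and $w(\emptyset)=0$, the game $w(G)$ with $w\equiv 1$ on nonempty sets is a couple game in the sense of Definition \ref{def.coup}. Theorem \ref{thm.dcc} then gives
$$
\phi_a(w(G))= (w(a)-1)B(G.a,x,(w_e)_{e\in E-a})+ C(G.a,x).
$$
Substituting $w(a)=1$ kills the first term, which leaves $\phi_a(w(G))=C(G.a,x)$. The polynomial $B(G.a,\dots)$ is finite for any $x$, so this substitution is legitimate.

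Because this rests entirely on Theorem \ref{thm.dcc}, I would also cross-check the result directly from Lemma \ref{l.shapleygraphic}. For $w\equiv 1$ on nonempty sets, Lemma \ref{l.coef} gives
$$
c_R(w)=\sum_{\emptyset\neq T\subset R}(-1)^{|R|-|T|}=-(-1)^{|R|} \quad\text{for } R\neq\emptyset.
$$
Hence
$$
\phi_a(w(G))=\sum_{a\in R\subset E}(-1)^{|R|-1}x^{c(R)}.
$$
Write $R=Q\cup\{a\}$ with $Q\subset E-a$. Then $c(R)=c_{G.a}(Q)$, since contracting $a$ merges exactly the endpoints of $a$. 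Also $(-1)^{|R|-1}=(-1)^{|Q|}$. The sum therefore becomes
$$
\sum_{Q\subset E(G.a)}(-1)^{|Q|}x^{c_{G.a}(Q)}=C(G.a,x)
$$
by Definition \ref{def.ch}.

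The step to watch is the identification $c(Q\cup\{a\})=c_{G.a}(Q)$ when $a$ is a loop. In that case $G.a$ should be read as $G$ with the loop removed, and the identity still holds. With that convention both arguments are routine.
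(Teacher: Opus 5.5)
Your main argument is exactly the paper's: the paper records Theorem~\ref{thm.dcc1} as the immediate special case $w(e)\equiv 1$ of Theorem~\ref{thm.dcc}, where the factor $w(a)-1$ vanishes. Your direct cross-check is also correct. It uses $c_R(w)=(-1)^{|R|-1}$ in Lemma~\ref{l.shapleygraphic} together with $c_G(Q\cup\{a\})=c_{G.a}(Q)$, including the loop case, and so confirms the result without the deletion--contraction machinery.
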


\section{An Extension}
\label{s.qchrom}

In this section, we generalize the connectivity augmentation process introduced in Section \ref{s.connectivity}: we replace the simple exponential by a more complex function which models a wider range of situations. We show that such a generalization still leads to games whose potential and Shapley values are related to studied chromatic functions.

Let $k>0$ be a natural number. The {\em quantum integer} is defined as $(k)_q= q^{k-1}+\ldots + q+ 1= \frac{q^k-1}{q-1}$. We also let $0_q= 0$. We note that $(k)_q|_{q=1}= k$.

We proceed consistently as in Section \ref{s.connectivity}. We first modify Definition \ref{def.graph-gamebasis}.

If $u:V\rightarrow \mathcal Q$ then for each $U\subset V$, we let $u(U)= \sum_{v\in U} u(v)$.

\begin{definition}[Basic q-connectivity augmented games]    \label{def.qgraph-gamebasis}
    Let $G=(V,E)$ be a graph, $u:V\rightarrow \mathcal Q$, and $x$ be a number. For $R\subset E$, let $b_q(R,G,u,x)$ be the characteristic function (of a cooperative game) defined as follows:
if $R\subset S\subset E$ then 
$$
b_q(R,G,u,x)(S)= |R|\prod_{W \text {component of }(V,R)}(x)_{q^{u(W)}},
$$ 
otherwise $b_q(R,G,u,x)(S)= 0$.
We call $b_q(R,G,u,x)$ a {\em basic q-connectivity augmented game}.
\end{definition}

We note that $b_q(R,G,u,x)|_{q=1}= b(R,G,x)$.

\begin{definition}[q-connectivity augmented games]
    \label{def.qconaugame}
 Let $G=(V,E)$ be a graph, $u:V\rightarrow \mathcal Q$, and $x$ be a number. Let $w:2^E\rightarrow \mathcal Q^+$ be a cooperative game of $\mathcal E$. We denote by $w_q(G,u)$ the cooperative game $w_q(G,u)=\sum_{R\subset E}c_R(w) b_q(R,G,u,x)$. We call $w_q(G,u)$ {\em q-connectivity augmented $w$}, or a {\em q-connectivity augmented game}.
\end{definition}

\begin{definition}
    \label{def.qcoup}
     Each q-connectivity augmented game $w_q(G,u)$ such that $w\in \mathcal E$ satisfies, for each $\emptyset\neq A$ that $w(A)= \prod_{e\in A}w(e)$,
      is called {\em q-couple game} (see Definition \ref{def.coup} of the couple game). 
    \end{definition}

Next lemma is analogous to Lemma \ref{l.potgraphic}.

   \begin{lemma}
 \label{l.qpotentialgraphic}   
Let $G=(V,E)$ be a graph and $u:V\rightarrow \mathcal Q$. The {\em Potential} $P(w_q(G,u))$ of the q-connectivity augmented game $w_q(G,u)$ satisfies:
$$
P(w_q(G,u))= 
\sum_{T\subset E}w(T) \sum_{T\subset R}(-1)^{|R|-|T|}
\prod_{W \text {component of }(V,R)}(x)_{q^{u(W)}}.
$$
\end{lemma}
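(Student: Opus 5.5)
The plan is to repeat the argument behind Lemma \ref{l.shapleygraphic} and Lemma \ref{l.potgraphic}, with the weight $x^{c(R)}$ replaced by
$$\pi_q(R):=\prod_{W \text{ component of }(V,R)}(x)_{q^{u(W)}}.$$

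\textbf{Step 1 (Shapley values in the $q$-setting).} By Definition \ref{def.qgraph-gamebasis}, for $\emptyset\neq R\subset E$ we have $b_q(R,G,u,x)=|R|\,\pi_q(R)\,d(R)$. Since $\phi_e(d(R))=1/|R|$ for $e\in R$ and $0$ otherwise (Definition \ref{def.shapleygraphic}), linearity of the Shapley value gives $\phi_e(b_q(R,G,u,x))=\pi_q(R)$ for $e\in R$ and $0$ otherwise. Hence, by Definition \ref{def.qconaugame},
$$\phi_e(w_q(G,u))=\sum_{e\in R\subset E}\pi_q(R)\,c_R(w).$$
Here $R=\emptyset$ contributes nothing: $c_\emptyset(w)=w(\emptyset)=0$, and we use the convention $d(\emptyset)=0$.

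\textbf{Step 2 (identifying the potential).} Set $P:=\sum_{\emptyset\neq R\subset E}\pi_q(R)c_R(w)$, and define the analogous quantity for every restriction to $E'\subset E$. For $R\subset E'$, both $c_R(w)$ and $\pi_q(R)$ depend only on $R$ (and on $V,u$), not on $E'$. Therefore the restriction of $w_q(G,u)$ to $E-e$ is the $q$-connectivity augmented game of $w|_{E-e}$ on $(V,E-e)$. So $P(E)-P(E-e)$ is exactly the sum over the sets $R$ containing $e$, which is $\phi_e$ by Step 1. Together with the normalization $P(\emptyset)=0$, this shows $P$ satisfies Definition \ref{def.pot}; it is the potential by the uniqueness of Hart and Mas-Colell, which follows inductively on $|E|$.

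\textbf{Step 3 (rewriting the sum).} Expand $c_R(w)=\sum_{T\subset R}(-1)^{|R|-|T|}w(T)$ as in Lemma \ref{l.coef}. Since $w(\emptyset)=0$, we may include $R=\emptyset$ in the sum. Swapping the order of summation gives
$$P=\sum_{T\subset E}w(T)\sum_{T\subset R\subset E}(-1)^{|R|-|T|}\pi_q(R),$$
which is the claimed formula.

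\textbf{Main obstacle.} The only delicate point is Step 2: checking that restriction commutes with the $q$-augmentation. It does because a component $W$ of $(V,R)$ and its weight $u(W)$ are determined by $R$ and the fixed vertex set $V$ alone. Everything else is formal, exactly as in Lemma \ref{l.potgraphic}.
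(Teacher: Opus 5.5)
Your proof is correct and takes the route the paper intends. The paper gives no separate argument for this lemma; it only says it is analogous to Lemma~\ref{l.potgraphic}, whose proof is exactly your three steps: Shapley values of the basic games by linearity, identification of the potential, then Lemma~\ref{l.coef} with $w(\emptyset)=0$ and a swap of summation. Your Step 2 spells out what the paper leaves tacit (and uses again in the proof of Theorem~\ref{thm.dcc}): the augmentation commutes with restriction to $E-e$, so the candidate formula really satisfies the potential's defining recursion.
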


\subsection{q-Chromatic function}
\label{sub.qchrom}

Let $k$ be a positive integer. Following \cite{L}, we let  
$$
C_q(G,x)= \sum_{s\in V(G,k)}q^{\sum_{v\in V}s(v)},
$$
where
$V(G,k)= \{s:V\rightarrow \{0, \dots, k-1\}; s \text{ proper coloring of }G\}$.

\begin{theorem}[\cite{L}]
    \label{thm.wh}
 Let $G=(V,E)$ be a graph, $x$ be a positive integer. Then
 $$
 C_q(G,x)= \sum_{A\subset E}(-1)^{|A|}\prod_{W \text {component of }(V,A)}(x)_{q^{|W|}}.
 $$
 \end{theorem}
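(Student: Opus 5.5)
We prove the identity of Theorem \ref{thm.wh} in the same way as Birkhoff's Theorem \ref{thm.bwh}, namely by inclusion--exclusion over the set of monochromatic edges, now carrying the weight $q^{\sum_v s(v)}$ along. For a colouring $s:V\rightarrow\{0,\dots,x-1\}$ let $m(s)$ denote its set of monochromatic edges, as in the proof of Theorem \ref{thm.forka}. The colouring $s$ is proper exactly when $m(s)=\emptyset$. Since $\sum_{A\subset m(s)}(-1)^{|A|}$ equals $1$ if $m(s)=\emptyset$ and $0$ otherwise, we get
$$
C_q(G,x)=\sum_{s:V\rightarrow\{0,\dots,x-1\}}q^{\sum_{v}s(v)}\sum_{A\subset m(s)}(-1)^{|A|}.
$$
Interchanging the two sums gives
$$
C_q(G,x)=\sum_{A\subset E}(-1)^{|A|}\sum_{s:\,A\subset m(s)}q^{\sum_v s(v)}.
$$

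The remaining step is to evaluate the inner sum for a fixed $A$. The condition $A\subset m(s)$ says that $s$ is constant on each connectivity component $W$ of $(V,A)$, and the colours on distinct components may be chosen independently. If component $W$ receives colour $j\in\{0,\dots,x-1\}$, it contributes the factor $q^{j|W|}$. The inner sum therefore factorises over components:
$$
\sum_{s:\,A\subset m(s)}q^{\sum_v s(v)}=\prod_{W}\sum_{j=0}^{x-1}\bigl(q^{|W|}\bigr)^j=\prod_{W}(x)_{q^{|W|}},
$$
where the last equality is the definition of the quantum integer $(x)_{q'}$ with $q'=q^{|W|}$. Substituting this back yields the stated formula.

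No step is a real obstacle; the only point needing care is the factorisation over components. The weight $q^{\sum_v s(v)}$ is multiplicative over vertices, and the constraint $A\subset m(s)$ couples vertices only within components of $(V,A)$, so the sum does split as a product. The convention that colours run over $\{0,\dots,x-1\}$, starting at $0$, is what makes each factor exactly $(x)_{q^{|W|}}$ rather than a shifted version of it.
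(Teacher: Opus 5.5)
Your proof is correct and complete. The inclusion--exclusion over $m(s)$, the exchange of sums, and the factorisation $\sum_{s:\,A\subset m(s)}q^{\sum_v s(v)}=\prod_W (x)_{q^{|W|}}$ are all valid, including when $G$ has loops or multiple edges. The paper gives no proof of this statement and only cites \cite{L}, but your argument is the standard one: it is the weighted analogue of the paper's proof of Theorem~\ref{thm.forka}, obtained by setting $y_e=-1$ and carrying the weight $q^{\sum_v s(v)}$ through the count of colourings with $A\subset m(s)$.
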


\begin{definition}[q-chromatic function \cite{L}]
\label{def.qch}
Let $G=(V,E)$ be a graph and let $x$ be a variable. The q-chromatic function of $G$ is 
 $$
 C_q(G,x)= \sum_{A\subset E}(-1)^{|A|}\prod_{W \text {component of }(V,A)}(x)_{q^{|W|}}.
 $$
\end{definition} 

Next, we define {\em weighted q-chromatic function}. If $u:V\rightarrow \mathcal Q$ then for each $U\subset V$, we let $u(U)= \sum_{v\in U} u(v)$.

\begin{definition}[weighted q-chromatic function]
\label{def.wqch}
Let $G=(V,E)$ be a graph and let $x$ be a variable. Let $u:V\rightarrow \mathcal Q$ be a function giving a weight to each vertex. The weighted q-chromatic function of $G$ is 
 $$
 C_q(G,u,x)= \sum_{A\subset E}(-1)^{|A|}\prod_{W \text {component of }(V,A)}(x)_{q^{u(W)}}.
 $$
\end{definition} 

We now extend the definition of {\em contraction} of a subset of edges to graphs with weighted vertices.

\begin{definition}[contraction in vertex-weighted graphs]
  \label{def.qcontr} 
  Let $G=(V,E)$ be a graph, let $u:V\rightarrow \mathcal Q$ be a function that gives a weight to each vertex, and let $A\subset E$. We denote by 
  $(G,u).A$ the pair $(G.A, u.A)$, where $G.A$ is the standard contracted graph defined in Definition \ref{def.contr}, 
  and $u.A: \mathcal W\rightarrow \mathcal Q$ assigns weights to the vertices of $G.A$ (equivalently, to the components of $(V,A)$), whose set is denoted by $\mathcal W$. The contracted weights satisfy: 
For each $w\in \mathcal W$, $u.A(w)= \sum_{v\in w} u(v)$.
\end{definition}

Now, we are ready to extend Theorem \ref{thm.chrom}.

\begin{theorem}
\label{thm.qchrom}
Let $G=(V,E)$ be a graph, $u:V\rightarrow \mathcal Q$ vertex-weights, let $x$ be a variable, and let $w$ be a cooperative game of $\mathcal E$. The Potential $P(w_q(G,u))$ of the q-augmented connectivity game $w_q(G,u)$ satisfies:
$$ 
P(w_q(G,u)) = \sum _ {R\subset E \text{ flat}} w(R)C_q((G,u).R,u.R,x),
$$
where $(G,u).R$ is the graph obtained from $(G,u)$ by contraction of edges of $R$. For $C_q$ see Definition \ref{def.qch}.
\end{theorem}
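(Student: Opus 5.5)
We mirror the proof of Theorem \ref{thm.chrom}, with $x^{c(R)}$ replaced by the multiplicative component weight $\Phi(R)=\prod_{W \text{ component of }(V,R)}(x)_{q^{u(W)}}$. We start from Lemma \ref{l.qpotentialgraphic}:
$$
P(w_q(G,u))=\sum_{T\subset E}w(T)\sum_{Q\subset E\setminus T}(-1)^{|Q|}\,\Phi(T\cup Q).
$$
Here we substituted $R=T\cup Q$. We then show two things. First, the inner sum vanishes whenever $T$ is not flat. Second, for flat $T$ the inner sum equals $C_q((G,u).T,u.T,x)$.

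\textbf{Step 1 (non-flat $T$).} If $T$ is not flat, pick $e\in E\setminus T$ with $c(T\cup\{e\})=c(T)$. Then both endpoints of $e$ lie in one component of $(V,T)$, hence in one component of $(V,T\cup Q)$ for every $Q$. Consequently $T\cup Q$ and $T\cup Q\cup\{e\}$ have the same vertex partition into components. Since $\Phi$ depends only on that partition (through the sums $u(W)$), we get $\Phi(T\cup Q)=\Phi(T\cup Q\cup\{e\})$. Pairing $Q$ with $Q\triangle\{e\}$ flips the sign $(-1)^{|Q|}$, so the inner sum is $0$. This is exactly the cancellation used in Theorem \ref{thm.chrom}, and it works because $\Phi$ is a function of the component partition.

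\textbf{Step 2 (flat $T$).} Edges $Q\subset E\setminus T$ correspond bijectively to edge subsets $Q'$ of $G.T$ by Definition \ref{def.contr}. The components of $(V,T\cup Q)$ correspond to the components of $(V(G.T),Q')$: each component $W'$ of the contracted graph is a union of $T$-components, and its underlying vertex set is $W=\bigcup_{w\in W'}w$. By Definition \ref{def.qcontr},
$$
(u.T)(W')=\sum_{w\in W'}u.T(w)=\sum_{w\in W'}\sum_{v\in w}u(v)=u(W).
$$
Hence $\Phi(T\cup Q)=\prod_{W'}(x)_{q^{(u.T)(W')}}$. Summing $(-1)^{|Q'|}$ times this over all $Q'$ gives, by Definition \ref{def.wqch}, precisely $C_q(G.T,u.T,x)$.

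Substituting both steps back into the displayed expression yields the theorem. The main point needing care is the weight bookkeeping in Step 2, namely that vertex weights add under contraction. The vanishing argument in Step 1 is robust: it only uses that $\Phi$ depends on the component partition, not on the number of components alone. We also read the notation $C_q((G,u).R,u.R,x)$ as the weighted $q$-chromatic function of Definition \ref{def.wqch} evaluated on the contracted weighted graph.
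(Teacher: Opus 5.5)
Your proof is correct and follows the same route as the paper: substitute $R=T\cup Q$ in Lemma \ref{l.qpotentialgraphic}, discard the non-flat $T$, and identify the remaining inner sum with the weighted $q$-chromatic function of the contracted vertex-weighted graph. You also supply two details the paper leaves implicit, namely the sign-reversing involution $Q\mapsto Q\triangle\{e\}$ that kills the non-flat terms, and the additivity of the vertex weights under contraction.
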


\begin{proof}
By Lemma \ref{l.qpotentialgraphic},
$$
P(w_q(G,u))= 
\sum_{T\subset E}w(T) \sum_{T\subset R}(-1)^{|R|-|T|}
\prod_{W \text {component of }(V,R)}(x)_{q^{u(W)}}=
$$
   $$
\sum_{T\subset E}w(T) \sum_{Q\subset E\setminus T}(-1)^{|Q|}
\prod_{W \text {component of }(V,Q\cup T)}(x)_{q^{u(W)}}=
$$ 
 $$
\sum_{T\subset E \text{ flat}}w(T) \sum_{Q\subset E\setminus T}(-1)^{|Q|}
\prod_{W \text {component of }(V((G,u).T),Q)}(x)_{q^{u(W)}}=
$$ 
$$
\sum _ {T\subset E} w(T)C_q(G.T,u.T,x).
$$ 
\end{proof}

Results of Section \ref{s.tutte} can also be extended to q-couple games.

\section{Conclusion}
\label{s.concl}
We initiate the study of cooperative games where there exist {\em groups of pre-aligned agents}. 
In this paper, each pre-aligned group has size two. 
Our goal is to find a way to determine the contribution of each individual local connection (pre-aligned couple) to the connectivity of the (reliability) network, where possibly each edge is given with the probability of not-failing. We model these contributions by Shapley values of the connectivity augmented 'local values' which are characteristic functions whose values are determined, for example, by the probabilities of not-failing.
We link such Shapley values to the world of chromatic and Tutte polynomials and the partition function of the Potts model from statistical physics.
Future work is to broaden the described basic concept to a more general setting.

\bibliographystyle{alpha}
\bibliography{main}

\end{document}